\newcommand{\setEder}{\mathcal{N}}
\newcommand{\Dtk}{\delta}
\newcommand{\pvhnb}[1]{{p}_{#1}}
\newcommand{\evh}[1]{e_{#1}}
\newcommand{\evect}{\bold{e}}
\newcommand{\pvect}{\bold{p}}
\newcommand{\tildepvect}{\tilde{\bold{p}}}
\newcommand{\Ivec}[1]{\mathds{1}_{#1}}
\newcommand{\Ivect}[1]{\mathds{1}_{#1}^{\top}}
\newcommand{\nonl}{\renewcommand{\nl}{\let\nl\oldnl}}% Remove line number for one line
\newcommand{\ub}[2]{\overline{#1}_{#2}}
\newcommand{\lb}[2]{\underline{#1}_{#2}}
\newcommand{\rcekagg}{(k+1)}
\newcommand{\setpvhnb}[2]{\mathcal{P}_{#1}(#2)}
\newcommand{\setevh}[2]{\mathcal{E}_{#1}(#2)}
\newcommand{\setevhrcomplete}[1]{\mathcal{R}_{#1}(k+1,\evh{#1}(k))}
\newcommand{\setevhr}[1]{\mathcal{R}_{#1}(k+1)}
\newcommand{\setpvect}[1]{\mathcal{P}(#1)}
\newcommand{\tildesetpvect}[1]{\tilde{\mathcal{P}}(#1)}
\newcommand{\setevect}[1]{\mathcal{E}(#1)}
\newcommand{\setpvhnbagg}[1]{\mathcal{P}_{\textrm{agg}}(#1)}
\newcommand{\tildesetpvhnbagg}[1]{\tilde{\mathcal{P}}_{\textrm{agg}}(#1)}
\newcommand{\setevhagg}[1]{\mathcal{E}_{\textrm{agg}}(#1)}
\newcommand{\setevhraggcomplete}{\mathcal{R}_{\text{agg}}(k+1,E(k))}
\newcommand{\setevhragg}{\mathcal{R}_{\textrm{agg}}(k+1)}
\newcommand{\setevhstaragg}{\mathcal{R}^*_{\textrm{agg}}(k+1)}
\newcommand{\costDispUp}{d_{\text{u}}}
\newcommand{\costDispLow}{d_{\text{l}}}
\newtheorem{ass}{Assumption}
\newtheorem{defi}{Definition}
\newtheorem{thm}{Theorem}
\newtheorem{lemm}{Lemma}
\newtheorem{rema}{Remark}
\newtheorem{example}{Example}
\newcommand*{\QEDA}{\hfill\ensuremath{\blacksquare}}%
\definecolor{timm}{rgb}{.1,.7,.1} %What Timm has changed
\DeclareMathOperator*{\argmin}{arg\,min}
\begin{document}
%
% paper title
% Titles are generally capitalized except for words such as a, an, and, as,
% at, but, by, for, in, nor, of, on, or, the, to and up, which are usually
% not capitalized unless they are the first or last word of the title.
% Linebreaks \\ can be used within to get better formatting as desired.
% Do not put math or special symbols in the title.
\title{
Towards Optimality Preserving Aggregation for Scheduling Distributed Energy Resources 
}
%
%
% author names and IEEE memberships
% note positions of commas and nonbreaking spaces ( ~ ) LaTeX will not break
% a structure at a ~ so this keeps an author's name from being broken across
% two lines.
% use \thanks{} to gain access to the first footnote area
% a separate \thanks must be used for each paragraph as LaTeX2e's \thanks
% was not built to handle multiple paragraphs
%

%, ,  \\ 

\author{Riccardo Remo Appino,~\IEEEmembership{Student Member,~IEEE,}
        Veit Hagenmeyer,~\IEEEmembership{Member,~IEEE,} and \\~Timm Faulwasser,~\IEEEmembership{Member,~IEEE}% <-this % stops a space
\thanks{Manuscript received ...
The authors gratefully acknowledge funding by the German Federal Ministry of Education and Research (BMBF) within the Kopernikus Project ENSURE, %‘New ENergy grid StructURes for the German Energiewende’
BMBF-Support Code 03SFK1A. 
%Moreover, VH and TF  acknowledge support by the Helmholtz Association under the Joint Initiative ``Energy System 2050– A Contribution of the Research Field Energy''.
%TF acknowledges further support from the Baden-W\"urttemberg Stiftung under the Elite Program for Postdocs.
}
\thanks{All authors are with the Institute for Automation and Applied Informatics,
Karlsruhe Institute of Technology, 76344 Eggenstein-Leopoldshafen, Germany; e-mails: \{riccardo.appino, veit.hagenmeyer\}@kit.edu, timm.faulwasser@ieee.org}% <-this % stops a space
}

% note the % following the last \IEEEmembership and also \thanks - 
% these prevent an unwanted space from occurring between the last author name
% and the end of the author line. i.e., if you had this:
% 
% \author{....lastname \thanks{...} \thanks{...} }
%                     ^------------^------------^----Do not want these spaces!
%
% a space would be appended to the last name and could cause every name on that
% line to be shifted left slightly. This is one of those "LaTeX things". For
% instance, "\textbf{A} \textbf{B}" will typeset as "A B" not "AB". To get
% "AB" then you have to do: "\textbf{A}\textbf{B}"
% \thanks is no different in this regard, so shield the last } of each \thanks
% that ends a line with a % and do not let a space in before the next \thanks.
% Spaces after \IEEEmembership other than the last one are OK (and needed) as
% you are supposed to have spaces between the names. For what it is worth,
% this is a minor point as most people would not even notice if the said evil
% space somehow managed to creep in.

% The paper headers
\markboth{The IEEE Transactions on Control of Network Systems}
%\markboth{Journal of \LaTeX\ Class Files,~Vol.~14, No.~8, August~2015}%
{Shell \MakeLowercase{\textit{et al.}}: Bare Demo of IEEEtran.cls for IEEE Journals}
% The only time the second header will appear is for the odd numbered pages
% after the title page when using the twoside option.
% 
% *** Note that you probably will NOT want to include the author's ***
% *** name in the headers of peer review papers.                   ***
% You can use \ifCLASSOPTIONpeerreview for conditional compilation here if
% you desire.

% If you want to put a publisher's ID mark on the page you can do it like
% this:
%\IEEEpubid{0000--0000/00\$00.00~\copyright~2015 IEEE}
% Remember, if you use this you must call \IEEEpubidadjcol in the second
% column for its text to clear the IEEEpubid mark.

% use for special paper notices
%\IEEEspecialpapernotice{(Invited Paper)}

% make the title area
\maketitle

% As a general rule, do not put math, special symbols or citations
% in the abstract or keywords.
\begin{abstract}
Scheduling the power exchange between a population of heterogeneous distributed energy resources and the corresponding upper-level system is an important control problem in power systems. 
A key challenge is the large number of (partially uncertain) parameters and decision variables that increase the computational burden and that complicate the structured consideration of uncertainties.
Reducing the number of decision variables by means of aggregation can alleviate these issues.
However, despite the frequent use of aggregation for storage, few works in the literature provide formal justification. 
In the present paper, we investigate aggregation of heterogeneous (storage) devices with time-varying power and energy constraints. 
In particular, we propose mild conditions on the constraints of each device guaranteeing the applicability of an aggregated model in scheduling without any loss of optimality in comparison to the complete problem.
We conclude with a discussion of limitations and possible extensions of our findings.   
\end{abstract}

% Note that keywords are not normally used for peerreview papers.
\begin{IEEEkeywords}
Electric Power Networks, Optimization, Problem Reduction, Optimal Control
\end{IEEEkeywords}

% For peer review papers, you can put extra information on the cover
% page as needed:
% \ifCLASSOPTIONpeerreview
% \begin{center} \bfseries EDICS Category: 3-BBND \end{center}
% \fi
%
% For peerreview papers, this IEEEtran command inserts a page break and
% creates the second title. It will be ignored for other modes.
\IEEEpeerreviewmaketitle

\section{Introduction}
\label{sec:intro} 

The ongoing paradigm changes in the operation of power systems entail the need for coordinated management and operation of large populations of Distributed Energy Resources (DERs) \cite{Burger17}.
To this end, different explicit and implicit aggregation strategies have been considered in the literature.  
This comprises concepts such as \emph{microgrids}, \emph{virtual power plants}, and \emph{integrated community energy systems}.
Despite differences in aims, level of integration, and operations' management, all of the above consider power exchange across a (virtual/physical) boundary between some kind of lower-level aggregated system and an upper level, which is regulated implicitly via energy markets \cite{Koirala16}. 
In this context, the problem of scheduling the power exchange between a population of heterogenous DERs (or an aggregation thereof) and the corresponding upper-level system is a pivotal problem that has received considerable research attention, cf. the overview \cite{Nosratabadi17} and the references therein. 
A general key challenge is the uncertainty surrounding loads and generators not controllable by the aggregating entity \cite{Appino18a}. 
Hence, scheduling of DERs is usually cast as a stochastic optimization problem.
The additional requirement of computational scalability---i.e. scheduling numerous devices at once---exacerbate the challenge,  as the collective response to the realizations of the uncertainties should consider spatial and temporal correlations.   

However, market regulations require only the total active power exchange with the upper-level grid to be scheduled in advance \cite{Koirala16}; the remaining degrees of freedom (i.e. the  distribution of the power exchange among the individual devices) can be determined in later steps.
This aspect motivates the use of aggregated models of DERs to the end of reducing the complexity of the complete scheduling problem.
The main idea is to calculate \textit{one schedule for each aggregation of DERs} instead of computing \textit{individual schedules for individual DERs}; thus leaving the decision on how to practically operate the individual devices---to enforce the aggregated schedule---to a subsequent moment.
The expected advantages of aggregated models are twofold: computational scalability \textit{and} improved handling of the  uncertainties.\footnote{Note that postponing the decision on the single-device schedules has the additional advantage of improving (statistically) the quality of the overall decision process, because of the additional information available at this latter moment, cf. \cite{King12}.}  

In this context, aggregated scheduling for a population of energy storage systems \cite{Subramanian13}, 
plug-in electric vehicles \cite{Wenzel17,Zhang16,Vandael13,Appino18c}, as well as thermostatically controlled loads \cite{Mathieu15,Hao15} have been recently investigated.
Moreover,  \cite{Xu16,Bernstein15a,Evans18} address populations of heterogenous devices. 
However, aggregating heterogenous DERs is difficult.
In fact, an aggregated model describing a cluster of DERs may induce relaxed constraints, which can lead to solutions violating the original constraints of the individual devices. 
Consequently, applying aggregated models to scheduling calls for guarantees on the feasibility of the solution with respect to the original problem.  
The correctness of reduced aggregated models is largely investigated as a control problem, targeting aspects as frequency response \cite{Villemagne87} or system stability \cite{Rungger18}.
However, the majority of the works dealing with optimization-based control of power systems focuses on convex relaxations of the power flow equations, see \cite{Low14}, disregarding reduced/aggregated models.
Only few works address exactness of generalized aggregated models applied to power systems, e.g. \cite{Evans18,Bernstein15a}. 
In particular, \cite{Evans18} addresses the problem of discharging a cluster of energy-constrained DERs in order to track a pre-computed schedule. 
The authors of \cite{Bernstein15a}, instead, proposed a fully general and theoretically justified aggregation method for the control of distribution systems, followed by an extensive validation study \cite{Reyes15}.

However, so far the following issue remains open: \emph{Is a conceptually simpler interval-based aggregation---as the one presented in \cite{Xu16}---likewise theoretically justified when applied to scheduling problems?}

In the present paper, we investigate scheduling of an aggregation of heterogeneous Energy-Constrained Distributed Energy Resources (EC-DERs), such as storage or flexible loads. 
Similar to \cite{Xu16}, we model both EC-DERs and an aggregation of EC-DERs as so-called ``time-varying batteries" \cite{Mathieu13}, i.e. storage with time-dependent power and energy constraints. 
To the best of the authors' knowledge, a thorough analysis of this aggregation concept has not been conducted yet. 
The contributions of the present paper are:
\begin{itemize}
\item We derive  sufficient conditions that an  aggregated ``time-varying battery" model for a population of  EC-DERs can be used for scheduling without comprising feasibility of the individual constraints of the EC-DERs.
\item We propose a methodology to ensure consistent online tracking of the aggregated schedule. This methodology is based on an appropriate ``dispersion" (that we call \textit{consistent dispersion}), computed via an optimization problem that can be solved point-wise in time.
\end{itemize}

The remainder of the paper is organized as follows: Section \ref{sec:problem_setup} presents the problem; Section \ref{sec:calc_bounds} details a commonly used aggregated model and its limitations; Section \ref{sec:feas_schedule} describes the idea of a consistent dispersion of the aggregated stored energy, ensuring feasibility at the following step; Section \ref{sec:prove_agg_opt} contains the main contribution of the paper: a methodology to compute a consistent dispersion and the proof that such a dispersion always exists; Section \ref{sec:limits_and_adv_of_agg_model} outlines the discussion; Section \ref{sec:conclusion} summarizes our findings.

\section{Scheduling for Integrated Energy Systems} \label{sec:problem_setup}
Subsequently, we introduce the problem analyzed throughout the paper.
First, we introduce the model of a single EC-DER. 
Then, we present a generic scheduling optimization problem.
Finally, we propose a ``time-varying battery" model as aggregation strategy and the corresponding aggregated scheduling.

\subsection{Energy-Constrained Distributed Energy Resources}

A large subset of the technologies referred to as Distributed Energy Resources (DER) is \textit{energy-constrained} \cite{Evans18}. 
Energy Constrained DERs (EC-DERs) are subject to dynamics modeling either the amount of physically stored energy or the integral amount of energy exchanged by the EC-DER over a given time period. 
Examples are batteries and other types of storage, as well as flexible demand.
We recall a quite general model of an EC-DER from \cite{Xu16}.\footnote{Note that the authors of \cite{Xu16} refer to  EC-DERs as ``flexible loads".}

Consider a population of $N$ heterogeneous EC-DERs $\setEder = \{1, \dots, N\}$. 
The $j$-th device is described by a simplified discrete-time model 
\begin{equation}
\evh{j}(k+1) = \evh{j}(k) + \Dtk \cdot \pvhnb{j}(k), \quad \evh{j}(0) = \evh{j}^0,  \label{eq:sd_state_eq_det} \\
\end{equation}
with time index $k \in \mathbb{N}$, sampling period $\Dtk$ and initial condition $\evh{j}^0$. 
The variable $\pvhnb{j}(k) \in \mathbb{R}$ is the (averaged) power output over the $k$-th step and $\evh{j}(k+1) \in \mathbb{R}$ is the energy state at time $k$.
Both $\pvhnb{j}(k)$ and $\evh{j}(k)$ are subject to time-varying constraints
\begin{subequations} 
\begin{align}
\pvhnb{j}(k) &\in  \setpvhnb{j}{k}, \label{eq:sd_pow_limit} \\
\evh{j}(k+1) &\in \setevh{j}{k+1}, \label{eq:sd_en_limit}  
\end{align}
with $\setpvhnb{j}{k}$ and $\setevh{j}{k}$ being closed real intervals, i.e. ${\setpvhnb{j}{k} = [\lb{p}{j}(k),\ub{p}{j}(k)] \subset \mathbb{R}}$ and $\setevh{j}{k+1} = [\lb{e}{j}(k+1), \ub{e}{j}(k+1)] \subset \mathbb{R}$.
\label{eq:ecder_limits}
\end{subequations}
Note that conversion losses are neglected in the model above. 
We will comment on this in Section \ref{sec:system_losses}.

\subsection{Scheduling with Individual Device Models}
\textbf{\begin{figure}
\vspace{-0.2cm}
\centering
\includegraphics[width=0.49\textwidth]{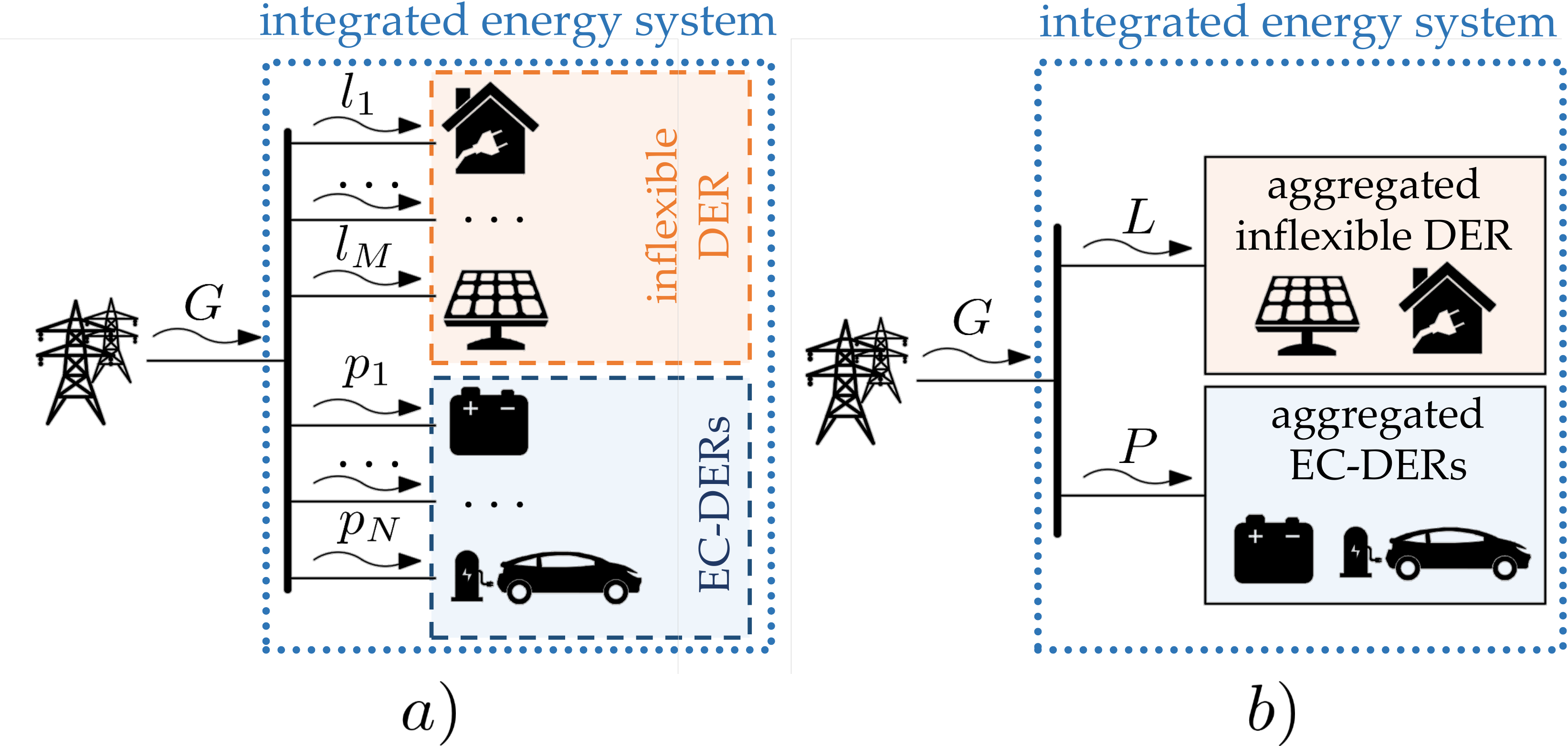}
\caption{Schematic representation of an integrated energy system. \label{fig:local_community} 
}
\vspace{-0.4cm}
\end{figure}}
Consider a microgrid including $M$ inflexible DERs and loads, and $N$ EC-DERs, as illustrated in Fig. \ref{fig:local_community}a.
There, $G(k)$ indicates the aggregated (average)  active power exchange between the internal and the external level at step $k$, $l_{i}(k)$ the aggregated (average) active power output over time step $k$ of the $i$-th inflexible DER/load with ${i\in \mathcal{M}=[1, \hdots, M]}$, and $\pvhnb{j}(k)$ the aggregated (average)  active power output over time step $k$ of the $j$-th EC-DER with $j \in \setEder$.
The connections are assumed to be lossless and such that the system components are able to exchange power mutually without any technical limit. 
The power balance at time $k$ reads
\begin{equation*}
G(k) =  \Ivect{} \pvect(k) - \Ivect{} \bold{l}(k),
\end{equation*}
with $\Ivec{}$ being a vector of ones whose dimension follows from context. 

Consider the problem of minimizing the power exchange $G(k)$ over a time horizon $\mathcal{K} = [0, \hdots, K-1] \subset \mathbb{N}$ with respect to a given cost function $c:~\mathbb{R}\to\mathbb{R}$.
Assuming that the system constraints are given and the values of $l_{i}(k)$ over $\mathcal{K}$ are known for all $i \in \mathcal{M}$, this scheduling problem reads
\begin{subequations}\label{eq:complete_opt_problem}
\begin{align}
\min_{
\begin{subarray}{c}
  \{G(k)\}_{\mathcal{K}}, \\ \{\pvect(k)\}_{\mathcal{K}}, \{\evect(k+1)\}_{\mathcal{K}} % \forall	k\in\mathcal{K}
  \end{subarray}} &\sum_{k \in \mathcal{K}} c\left(G(k)\right)\\
\text{s. t. \,\,} \forall k \in \mathcal{K} \quad& \nonumber \\
  G(k)\phantom{+1,} =& \, \Ivect{} \pvect(k) - \Ivect{} \bold{l}(k), \label{eq:constr_pow_balance}\\ 
 \evect(k+1) =& \, \evect(k) + \Dtk \cdot \pvect(k), \quad \evect(0) = \evect^0,  \label{eq:sd_state_eq_det_vector} \\
 \pvect(k) \phantom{+1,} \in& \, \setpvect{k}, \label{eq:sd_pow_limit_vector}\\
 \evect(k+1) \in&  \, \setevect{k+1}. \label{eq:sd_en_limit_vector} 
% \pvhnb{j}(k) &\in  \setpvhnb{j}{k} \quad \forall j \in \mathcal{N}, \label{eq:sd_pow_limit} \\
% \evh{j}(k+1) &\in \setevh{j}{k+1} \quad \forall j \in \mathcal{N}. \label{eq:sd_en_limit}  
\end{align}
\end{subequations}
Here, we use the vector notation 
\begin{align*}
\bold{l}(k) \doteq [l_{1}(k), \hdots,l_{M}(k)] \in \mathbb{R}^M, \\
\pvect(k) \doteq [\pvhnb{1}(k), \hdots, \pvhnb{N}(k)]  \in \mathbb{R}^N, \\
\evect(k) \doteq [\evh{1}(k), \hdots, \evh{N}(k)] \in \mathbb{R}^N.
\end{align*}
Furthermore, $\{\cdot\}_{\mathcal{K}}$ indicates the sequence of $\cdot$ over $\mathcal{K}$.

The constraints are as follows: Equality
\eqref{eq:constr_pow_balance} is the power balance and \eqref{eq:sd_state_eq_det_vector}-\eqref{eq:sd_en_limit_vector} model the EC-DERs, see \eqref{eq:ecder_limits} and \eqref{eq:sd_state_eq_det}.
The sets $\setpvect{k}$ and $\setevect{k}$ are hyperboxes of $\mathbb{R}^N$ built upon the constraints of the single devices, i.e.  
\begin{subequations}
\label{eq:sd_model_vector} \begin{gather}
\setpvect{k} = \setpvhnb{1}{k} \times ... \times \setpvhnb{N}{k} \subset \mathbb{R}^N, \\ %\label{eq:sd_pow_limit_vector}\\
\setevect{k} = \setevh{1}{k}  \times ... \times \setevh{N}{k} \subset \mathbb{R}^N. %\label{eq:sd_en_limit_vector}
\end{gather}
\end{subequations}

\subsection{Scheduling and Aggregated Models}
\label{sec:scheduling_problem_agg_model}
Upon solving \eqref{eq:complete_opt_problem}---and provided it is feasible---$\{\pvect(k)\}_{k\in\mathcal{K}}$ and $\{\evect(k+1)\}_{k\in\mathcal{K}}$ are obtained. 
However, in terms of coordination with the upper-level grid,  the schedule itself, i.e. $\{G(k)\}_{\mathcal{K}}$, is the only information of crucial interest. 
Moreover, it is also the only decision variable entering the cost function directly.
Thus, the question arises whether it is possible to reduce the number of constraints and decision variables in \eqref{eq:complete_opt_problem} by means of aggregation.

The first step in aggregating DERs  is to  cluster the power outputs of DERs in inflexible ones and energy-constrained ones, see Fig. \ref{fig:local_community}b. 
To this end, consider the function 
\begin{equation}
\label{eq:aggregating_function}
s: \mathbb{R}^{N_x} \rightarrow \mathbb{R}, \quad \bold{x} \mapsto X = \Ivect{} \bold{x},
\end{equation}
which sums---i.e. \emph{aggregates}---the elements of a vector ${\bold{x} \in \mathbb{R}^{N_x}}$.
Formally, its inverse
\begin{equation}
\label{eq:dispersion}
s^{-1}: \mathbb{R} \rightarrow \mathbb{R}^{N_x}, \quad s^{-1}(X) = \{\bold{x} \in \mathbb{R}^{N_x}\,|\, X = \Ivect{} \bold{x} \},
\end{equation}
is the set-valued (pre-image) map. We refer to $s^{-1}(X) \subset  \mathbb{R}^N_x$ as \textit{the set of dispersions of $X$ to $\bold{x}$}.\footnote{This notion is chosen to the end of avoiding confusion with \textit{statistical distributions} of random variables, which are also of interest in scheduling problems.}
The aggregated power outputs and the aggregated energy read 
\begin{equation} \label{eq:agg_powers}
%L(k)\doteq\Ivect{} \bold{l}(k), \quad {P}(k) \doteq \Ivect{} \pvect(k).
L(k)\doteq s(\bold{l}(k)), \quad {P}(k) \doteq s(\pvect(k)),\quad {E}(k) \doteq s(\evect(k)).
\end{equation}
These aggregated variables allow re-writing \eqref{eq:complete_opt_problem} as follows:
\begin{subequations}
\label{eq:complete_opt_problem_with_agg} \begin{align}
\min_{
\begin{subarray}{c}
  \{G(k)\}_{\mathcal{K}}, \\ \{P(k)\}_{\mathcal{K}},\{E(k+1)\}_{\mathcal{K}}, \\ \{\pvect(k)\}_{\mathcal{K}}, \{\evect(k+1)\}_{\mathcal{K}} % \forall	k\in\mathcal{K}
  \end{subarray}} &\sum_{k \in \mathcal{K}} c\left(G(k)\right)\\
  \text{s. t. \,\,} \forall k \in \mathcal{K} \quad& \nonumber \\
G(k)\phantom{+1,} =& \, P(k) - L(k), \label{eq:constr_pow_balance_with_agg}\\ 
 \evect(k+1) =& \, \evect(k) + \Dtk \cdot \pvect(k), \quad \evect(0) = \evect^0, \label{eq:sd_dynamic_vector_with_agg} \\
 \pvect(k)\phantom{+1,} \in& \, \setpvect{k}, \label{eq:sd_pow_limit_vector_with_agg} \\
 \evect(k+1) \in& \,  \setevect{k+1}, \label{eq:sd_en_limit_vector_with_agg} \\
%& {P}(k)\phantom{+1,} =& \, \Ivect{} \pvect(k) \quad &\forall k \in \mathcal{K},  \\
%& {E}(k+1) = \Ivect{} \evect(k+1) \quad &\forall k \in \mathcal{K}, \label{eq:sum_energy_cons_with_agg}
 {P}(k)\phantom{+1,} =& \, s(\pvect(k)), \\
{E}(k+1) =& \, s(\evect(k+1)). \label{eq:sum_energy_cons_with_agg} 
%& \pvhnb{j}(k) \in  \setpvhnb{j}{k} \quad \forall j \in \mathcal{N}, \label{eq:sd_pow_limit} \\
%& \evh{j}(k+1) \in \setevh{j}{k+1} \quad \forall j \in \mathcal{N}, \label{eq:sd_en_limit}  
\end{align}
\end{subequations}
which is equivalent to \eqref{eq:complete_opt_problem} with the additional decision variables $\{P(k)\}_{\mathcal{K}}$ and $\{E(k)\}_{\mathcal{K}}$.

Aggregated scheduling builds upon \eqref{eq:complete_opt_problem_with_agg}, by dropping the constraints on $\pvect(k)$ and $\evect(k)$ and substituting them with constraints on the aggregated variables.
 This yields 
\begin{subequations}\label{eq:aggregated_opt_problem}
\begin{align}
\min_{
\begin{subarray}{c}
   \{G(k)\}_{\mathcal{K}}, \\ \{P(k)\}_{\mathcal{K}},  \{E(k+1)\}_{\mathcal{K}}
  \end{subarray}} &\sum_{k \in \mathcal{K}} c\left(G(k)\right)\\
    \text{s. t. \,\,} \forall k \in \mathcal{K} \quad& \nonumber \\
G(k) \phantom{+1,} =& \, P(k) - L(k), \label{eq:agg_pow_balance}\\ 
 E(k+1) =& \, E(k) + \Dtk \cdot P(k), \quad E(0) = s(\evect^0),\label{eq:agg_state_eq_det} \\
 P(k) \phantom{+1,} \in& \,   \setpvhnbagg{k}, \label{eq:agg_pow_limit} \\
 E(k+1) \in& \, \setevhagg{k+1}. \label{eq:agg_en_limit}
\end{align}
\end{subequations}  
The aggregated problem avoids computing  $\{\pvect(k)\}_{\mathcal{K}}$ and $\{\evect(k)\}_{\mathcal{K}}$ alongside with $\{G(k)\}_{\mathcal{K}}$. 
However, whenever the aggregated constraints sets  $\setpvhnbagg{k}$ and $\setevhagg{k+1}$ do not coincide with the projection of the feasible set of $\pvect(k)$ and $\evect(k)$ from \eqref{eq:complete_opt_problem}, the use of \eqref{eq:aggregated_opt_problem} as a substitute of \eqref{eq:complete_opt_problem} for scheduling purposes may lead to problems.
In the following, we illustrate this via an abstraction.

\subsection{Abstract Problem Statement} \label{sec:abstract_problem}
Conceptually, Problem \eqref{eq:complete_opt_problem} equals
\begin{subequations}\label{eq:abstract_opt_problem}
\begin{align}
\min_{
\begin{subarray}{c}
   \{\bold{x}(k+1)\}_{\mathcal{K}} \end{subarray}} &\sum_{k \in \mathcal{K}}\ell\left(h(\bold{x}(k+1))\right)\\
\text{s.t. \,\,} & \bold{x}(k+1) \in \mathcal{X}(k+1, \bold{x}(k)) \subset \mathbb{R}^{N_x}, 
\end{align}
\end{subequations}  
where the cost function $\ell\circ h:~ \mathbb{R}^{N_x} \rightarrow \mathbb{R}$ is  continuous in $\bold{x}$ and for all $k \in \mathcal{K}$ the constraint set  $\mathcal{X}(k+1, \bold{x}(k))\subset \mathbb{R}^{N_x}$ is non-empty and compact.
Aggregation, as done in \eqref{eq:aggregated_opt_problem}, consists of exploiting the linear projection 
\[
h: \mathbb{R}^ {N_x} \to\mathbb{R}^{N_y}, \quad \bold x \mapsto \bold{y} = h(\bold{x}), \quad \text{with}\quad {N_y < N_x}
\]
to reduce the number of variables.
Applying this to \eqref{eq:abstract_opt_problem} yields
\begin{subequations}\label{eq:abstract_opt_problem_aggregated}
\begin{align}
\min_{
\begin{subarray}{c}
   \{\bold{y}(k+1)\}_{\mathcal{K}} \end{subarray}} &\sum_{k \in \mathcal{K}} \ell\left(\bold{y}(k))\right)\\
\text{s.t. \,\,} & \bold{y}(k+1) \in \mathcal{Y}(k+1, \bold{y}(k)) \subset \mathbb{R}^{N_y}. 
\end{align}
\end{subequations} 
Observe that by construction the cost functions of \eqref{eq:abstract_opt_problem} and \eqref{eq:abstract_opt_problem_aggregated} are equivalent. 
Let $h(\mathcal{X})$ denote the point-wise application of $h$ to $\mathcal{X}$, i.e. 
\begin{equation*}
h(\mathcal{X}) = \left\lbrace \bold{y} = h(\bold{x})\,|\, x \in \mathcal{X} \right\rbrace.
\end{equation*}
 The following lemma is easily obtained. 
\begin{lemm}[Optimality preserving aggregation] \label{lem:equiv}
Consider Problems \eqref{eq:abstract_opt_problem} and \eqref{eq:abstract_opt_problem_aggregated}, let $\bold{x}^\star\in\mathbb{R}^{N_x\cdot K}$ and $\bold{y}^\star\in\mathbb{R}^{N_y\cdot K}$ denote the respective optimal solutions. 
The identity 
\[ 
\bold{y}^\star = h(\bold{x}^\star)
\]
holds for arbitrary continuous choices of $\ell:\mathbb{R}^{N_y} \to \mathbb{R}$ if and only if 
\begin{equation}\label{eq:general_set_equivalence}
\mathcal{Y}(k+1, \bold{y}(k)) \equiv h(\mathcal{X}(k+1, \bold{x}(k))),
\end{equation}
for all $k \in \mathcal{K}$.
\QEDA
\end{lemm}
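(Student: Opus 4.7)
The plan is to prove both directions of the biconditional by exploiting that the running cost in \eqref{eq:abstract_opt_problem} depends on $\bold{x}$ only through $h(\bold{x})$, so the two problems differ only through their feasible sets after projection.

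For sufficiency ($\Leftarrow$), I would argue by induction on the time index starting from the common initial condition $\bold{x}(0)$ with $\bold{y}(0) := h(\bold{x}(0))$. Under the hypothesis $\mathcal{Y}(k+1, h(\bold{x}(k))) = h(\mathcal{X}(k+1, \bold{x}(k)))$ at every step I would establish (i) that every feasible trajectory $\{\bold{x}(k+1)\}_{\mathcal{K}}$ of \eqref{eq:abstract_opt_problem} maps via $h$ to a feasible trajectory $\{h(\bold{x}(k+1))\}_{\mathcal{K}}$ of \eqref{eq:abstract_opt_problem_aggregated} with identical cost, and (ii) that every feasible $\{\bold{y}(k+1)\}_{\mathcal{K}}$ of \eqref{eq:abstract_opt_problem_aggregated} admits a step-by-step preimage trajectory $\{\bold{x}(k+1)\}_{\mathcal{K}}$ that is feasible for \eqref{eq:abstract_opt_problem} with the same cost (at each step a preimage in $\mathcal{X}(k+1,\bold{x}(k))$ exists precisely because of the set equivalence). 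Consequently the two optimal values coincide and $h$ carries optima to optima, yielding $\bold{y}^\star = h(\bold{x}^\star)$.

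For necessity ($\Rightarrow$) I would argue by contrapositive. If \eqref{eq:general_set_equivalence} fails, choose the earliest step $k^\star$ at which it fails for some reachable $\bold{x}(k^\star)$; by construction the aggregated problem can mirror the original through $h$ up to step $k^\star$. Let $\bold{y}'$ lie in the symmetric difference of $\mathcal{Y}(k^\star+1, h(\bold{x}(k^\star)))$ and $h(\mathcal{X}(k^\star+1, \bold{x}(k^\star)))$; without loss of generality take $\bold{y}' \in \mathcal{Y}(k^\star+1, h(\bold{x}(k^\star))) \setminus h(\mathcal{X}(k^\star+1, \bold{x}(k^\star)))$ (the dual case is analogous). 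Compactness of $\mathcal{X}(k^\star+1,\bold{x}(k^\star))$ together with continuity of the linear $h$ gives compactness, hence closedness, of $h(\mathcal{X}(k^\star+1,\bold{x}(k^\star)))$, so $d := \mathrm{dist}(\bold{y}', h(\mathcal{X}(k^\star+1,\bold{x}(k^\star)))) > 0$. Choosing the continuous cost $\ell(\bold{y}) = \|\bold{y}-\bold{y}'\|^2$, the aggregated problem admits a trajectory that mirrors the original through step $k^\star$ and then lands on $\bold{y}'$ at step $k^\star+1$ with zero contribution there, while every original trajectory routed through $\bold{x}(k^\star)$ pays at least $d^2$ at that step. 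This forces $V^{\text{agg}} < V^{\text{org}}$, which is incompatible with $\bold{y}^\star = h(\bold{x}^\star)$.

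The main obstacle will be twofold: in sufficiency, performing the step-by-step lift of a feasible $\bold{y}$-trajectory to a feasible $\bold{x}$-trajectory, which requires a selection in a compact-valued correspondence at each time instant (easily justified but not entirely trivial); and in necessity, arguing that the adversarial $\ell$ really produces a strict optimal-value gap despite possibly nontrivial contributions from steps other than $k^\star+1$. The latter is handled by synchronizing both trajectories for $k \le k^\star$ (possible by the earliest-discrepancy choice) and selecting any feasible extension past $k^\star+1$ in both problems, so that the difference between $V^{\text{agg}}$ and $V^{\text{org}}$ is dominated by the strictly positive gap $d^2$ isolated at the step of discrepancy.
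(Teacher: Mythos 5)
The paper gives no proof of this lemma (it is declared ``straightforward and thus omitted''), so there is nothing to compare against; judged on its own merits, your sufficiency direction is sound: the stage cost depends on $\bold{x}$ only through $h(\bold{x})$, images of feasible trajectories are feasible, and the set equivalence lets you lift any feasible $\bold{y}$-trajectory step by step to a feasible $\bold{x}$-trajectory of equal cost, so the feasible cost sets and hence the optima coincide. The necessity direction, however, has a genuine gap.

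Your adversarial argument bounds $V^{\text{agg}}$ from above by the cost of a trajectory that mirrors the original up to $k^\star$ and then lands on $\bold{y}'$, and bounds from below only the cost of original trajectories \emph{routed through} $\bold{x}(k^\star)$. But $V^{\text{org}}$ is the minimum over \emph{all} feasible trajectories of \eqref{eq:abstract_opt_problem}, which need not visit $\bold{x}(k^\star)$, nor any preimage of $h(\bold{x}(k^\star))$. Since the single stage cost $\ell(\bold{y})=\|\bold{y}-\bold{y}'\|^2$ is summed over every $k\in\mathcal{K}$, it penalizes distance to $\bold{y}'$ at all steps alike; ``synchronizing both trajectories for $k\le k^\star$'' is a property you would like the optimizers to have, not one you can impose, so the claimed strict gap $V^{\text{agg}}<V^{\text{org}}$ does not follow. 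A second, related defect: the pointwise discrepancy $\bold{y}'\in\mathcal{Y}(k^\star+1,h(\bold{x}(k^\star)))\setminus h(\mathcal{X}(k^\star+1,\bold{x}(k^\star)))$ can be invisible at the trajectory level, e.g.\ if $\bold{y}'$ is reachable via a different preimage $\tilde{\bold{x}}(k^\star)$ with $h(\tilde{\bold{x}}(k^\star))=h(\bold{x}(k^\star))$ or via a different aggregated history; in that case the set of aggregated-feasible trajectories equals the $h$-image of the original feasible trajectory set and no continuous $\ell$ can separate the two problems, so your contrapositive cannot close. A correct necessity proof must first translate the per-step set inequality into a discrepancy between these two \emph{trajectory} sets (which requires pinning down the quantification over $\bold{x}(k)$ implicit in \eqref{eq:general_set_equivalence}), and only then construct the separating cost using compactness; as written, your argument skips that reduction.
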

The proof of this result is straightforward and thus omitted. 
Note that condition \eqref{eq:general_set_equivalence} is not easily enforced via direct calculation of $h(\mathcal{X}(k+1, \bold{x}(k)))$.
In fact, this computation would require the knowledge of the state $\bold{x}$ at each time step, knowledge that is lost during aggregation by construction.
Thus, there is a need of alternative approaches characterizing the set $\mathcal{Y}(k+1, \bold{y}(k))$ such that \eqref{eq:general_set_equivalence} is satisfied.

\section{Calculating Bounds for the Aggregated Model} \label{sec:calc_bounds}
Next, we show that aggregated constraints often used in the context of scheduling problems cannot guarantee the conditions of Lemma \ref{lem:equiv}. 
To this end, we return to the notation of Section \ref{sec:problem_setup}A-C.
Clearly, the construction of the sets $\setpvhnbagg{k}$ and $\setevhagg{k}$ is ambiguous. 
Nevertheless, a common practice (see for example \cite{Xu16}) is to consider intervals of $\mathbb{R}$ given by the (point-wise) projection of the hyperboxes $\setpvect{k}$ and $\setevect{k}$, 
\begin{equation} \label{eq:agg_borders}
\setpvhnbagg{k} = s(\setpvect{k}) , \quad \setevhagg{k} = s(\setevect{k}).
\end{equation}
Practically speaking, this choice leads to
\begin{subequations} 
\begin{equation}
\setpvhnbagg{k} = [\lb{P}{}(k), \ub{P}{}(k)] , \quad \setevhagg{k} = [\lb{E}{}(k), \ub{E}{}(k)],
\end{equation}
with
\begin{equation}
\label{eq:agg_pw_limit_calc_simple}
\lb{P}{}(k) = \sum_{\mathclap{j\in \setEder}} \lb{p}{j}(k), \quad 
\ub{P}{}(k) = \sum_{\mathclap{j\in \setEder}} \ub{p}{j}(k),
\end{equation}
and 
\begin{equation}
\label{eq:agg_en_limit_calc_simple}
\lb{E}{}(k) = \sum_{\mathclap{j\in \setEder}} \lb{e}{j}(k), \quad 
\ub{E}{}(k) = \sum_{\mathclap{j\in \setEder}} \ub{e}{j}(k).
\end{equation}
\end{subequations}
In the sequel, the aggregated constraints sets $\setpvhnbagg{k}$ and $\setevhagg{k}$ are assumed to be obtained via \eqref{eq:agg_borders}.
The underlying ratio stems from interval arithmetics \cite{Jaulin01}. 
Indeed, it can be shown that if $\pvect(k)\in\setpvect{k}$, then 
\[
P(k)= s(\pvect(k)) \in s(\setpvect{k}).
\]
The same holds for the aggregated energy $E(k)$.
Thus, feasibility of $P(k)$ and $E(k)$ in \eqref{eq:sd_pow_limit_vector_with_agg}-\eqref{eq:sum_energy_cons_with_agg} implies their feasibility with respect to  \eqref{eq:agg_pow_limit}-\eqref{eq:agg_en_limit}.
Moreover, it is easy to see that if $e(k)$ and $p(k)$ satisfy \eqref{eq:sd_dynamic_vector_with_agg}, then $P(k)$ and $E(k)$ satisfy \eqref{eq:agg_state_eq_det}.
To summarize, considering \eqref{eq:agg_borders}, the aggregated dynamics \eqref{eq:agg_state_eq_det} and the constraints \eqref{eq:agg_pow_limit}-\eqref{eq:agg_en_limit} do not restrict the feasible set of \eqref{eq:complete_opt_problem_with_agg}.
Rather, they are a relaxation of the projection of the original constraint set in \eqref{eq:complete_opt_problem}.
However, this can cause situations in which the aggregated schedules of $P(k)$ and $E(k)$ cannot be dispersed without violating individual constraints on $\pvect(k)$ and $\evect(k)$.  
Next, we illustrate this issue with a simple example.

\begin{example}[Non-dispersable aggregation] \label{ex:inf_states}
Consider two EC-DERs, i.e. ${\setEder = \{1, 2\}}$, and two subsequent time instants, $k$ and ${k +1}$.
For the sake of simplicity, let the EC-DERs have identical time-invariant power and energy limits, i.e.
$\setpvhnb{1}{k}=\setpvhnb{2}{k}= [\lb{p}{}, \ub{p}{}] \subset \mathbb{R}$, and
$\setevh{1}{k}=\setevh{2}{k}=\setevh{1}{k+1}=\setevh{2}{k+1}= \left[\lb{e}{}, \ub{e}{}\right] \subset \mathbb{R}$.
Furthermore, let $\ub{p}{} = -\lb{p}{}$ with $2\ub{p}{}\Dtk \leq \ub{e}{} - \lb{e}{}$. 

Now, consider the case in which $e_1(k) = \ub{e}{}$ and $e_2(k) = \lb{e}{}$, implying $E(k) = \lb{e}{}+\ub{e}{}$.
The constraints limiting $e_1(k+1)$ and $e_2(k+1)$ are depicted in Fig. \ref{fig:set_example_nointervals}a and Fig. \ref{fig:set_example_nointervals}b respectively.
The blue striped interval represents the feasible states according to the energy constraint; the yellow cone indicates the states admissibly reachable from $e_1(k)$ and $e_2(k)$. 
According to \eqref{eq:agg_borders}, the aggregated constraints are
\begin{subequations}
\label{eq:example_agg_constraints}
\begin{align}
P(k) \in \left[2\lb{p}{}, 2\ub{p}{}\right], \quad
E(k+1) \in \left[2\lb{e}{}, 2\ub{e}{}\right],
\end{align} 
\end{subequations}
see Fig. \ref{fig:set_example_nointervals}c.

For the sake of illustration, consider $E(k+1)=\lb{e}{}+\ub{e}{}+\Dtk\cdot2\ub{p}{}$; the corresponding value of $P(k)=2\ub{p}{}$ follows. 
This choice is feasible for the aggregated model; it satisfies \eqref{eq:example_agg_constraints}.
In Fig. \ref{fig:set_example_nointervals}c this choice of $E(k+1)$ is depicted by an orange circle.
However, the dispersion of $E(k+1)$ and $P(k)$ to individual devices leads to infeasibility on the individual side.

However, a feasible choice of $\pvect(k)$, such that $\pvect(k) = s^{-1}(2\ub{p}{})$ exists, it is:
\begin{align*}
\pvhnb{1}(k) = \ub{p}{} \in [\lb{p}{}, \ub{p}{}], \quad \pvhnb{2}(k) = \ub{p}{} \in [\lb{p}{}, \ub{p}{}].
%&\pvhnb{1}(k_1) = \overline{p}, \quad  \evh{1}(k_2) = \evh{1}(k_1) + \overline{p}\Dtk=\overline{e}+ \overline{p}\Dtk \notin [\lb{e}{}, \ub{e}{}], \\
%&\pvhnb{2}(k_1) = \overline{p}, \quad \evh{2}(k_2) = \evh{2}(k_1) + \overline{p}\Dtk=\underline{e}+ \overline{p}\Dtk \in [\lb{e}{}, \ub{e}{}].
\end{align*}
Yet this implies
\begin{align*}
%\evh{1}(k_2) = \evh{1}(k_1) + \overline{p}\Dtk=\overline{e}+ \overline{p}\Dtk \notin [\lb{e}{}, \ub{e}{}],  \, \evh{2}(k_2) = \evh{2}(k_1) + \overline{p}\Dtk=\underline{e}+ \overline{p}\Dtk \in [\lb{e}{}, \ub{e}{}].
\evh{1}(k+1) = \ub{e}{} + \Dtk \cdot \ub{p}{} \notin [\lb{e}{}, \ub{e}{}],  \,\,\, \evh{2}(k+1) = \lb{e}{}+ \Dtk \cdot \ub{p}{} \in [\lb{e}{}, \ub{e}{}],
\end{align*}
which violates the energy constraint of EC-DER $j=1$.
This can be seen in Fig. \ref{fig:set_example_nointervals}a and \ref{fig:set_example_nointervals}b, where empty red circles indicates energy values which sum up to $E(k+1)$ and are reachable from $e_1(0), e_2(0)$ but not feasible with respect to ${e_1(k+1)\in \left[\lb{e}{}, \ub{e}{}\right]}$.

At the same time, a feasible choice of $\evect(k+1)$ such that $\evect(k+1) = s^{-1}(\lb{e}{}+\ub{e}{}+\Dtk\cdot2\ub{p}{})$ is 
\begin{align*}
\evh{1}(k+1) = \ub{e}{} \in [\lb{e}{}, \ub{e}{}],  \quad \evh{2}(k+1) = \lb{e}{} + \Dtk \cdot 2\ub{p}{} \in [\lb{e}{}, \ub{e}{}].
\end{align*}
This implies $\pvhnb{1}(k) = \frac{\ub{e}{} - \ub{e}{}}{\Dtk} = 
0 \in [\lb{p}{}, \ub{p}{}]$, and
\begin{align*}
%\pvhnb{1}(k) = %\frac{\ub{e}{} - \ub{e}{}}{\Dtk} = 0 \in [\lb{p}{}, \ub{p}{}],~  
\pvhnb{2}(k) = \frac{\lb{e}{} - (\lb{e}{}+\ub{e}{}+2\ub{p}{})}{\Dtk} \notin [\lb{p}{}, \ub{p}{}],
\end{align*}
which violates the power constraint of EC-DER $j=2$.
This is depicted in Fig. \ref{fig:set_example_nointervals}a and \ref{fig:set_example_nointervals}b by red circles, which indicate energy values summing up to $E(k+1)$ that are feasible but not reachable for EC-DER $j=2$.
Hence, the chosen values for $E(k+1)$ and $P(k)$ are feasible for the aggregated model, but cannot be dispersed to $\pvect(k)$ and $\evect(k+1)$. \QEDA
\end{example}

\begin{figure}
\vspace{-0.2cm}
\centering
\includegraphics[width=0.43\textwidth]{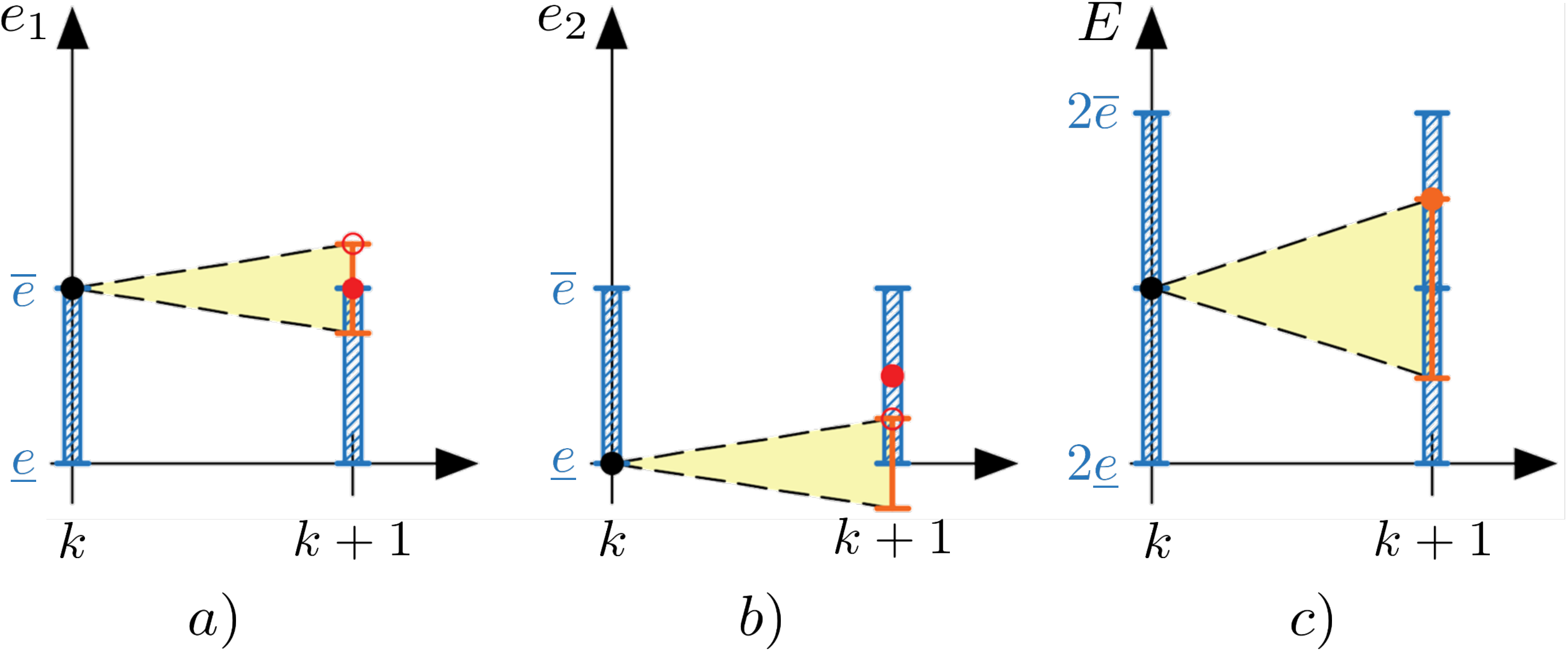}
\caption{Ilustration of Example 1. \label{fig:set_example_nointervals} 
}
\vspace{-0.4cm}
\end{figure}

Example \ref{ex:inf_states} illustrates that an aggregated model with power and energy constraint computed as in \eqref{eq:agg_borders} might lead to infeasibilities in practice.
The time-wise coupling between the variables $\pvhnb{j}(k)$ and $\evh{j}(k)$---and, consequently, between constraints \eqref{eq:ecder_limits}---introduced by the discrete-time dynamics \eqref{eq:sd_state_eq_det} is the main source of these infeasibilities. 
In other words,  we have to account for the constrained reachability properties of the dynamics, ignored in \eqref{eq:agg_borders}. 

Next, we derive an energy constraint summarizing the power and the energy constraints \eqref{eq:ecder_limits}, provided the knowledge of the initial condition $\evect(0)$. 
Given $\evh{j}(k)$, the power constraint \eqref{eq:sd_pow_limit} can be seen as an implicit constraint on $\evh{j}(k+1)$ because it limits the energy state that can be reached at the following step ${k+1}$.
This aspect can be formalized considering the 1-step reachable set of \eqref{eq:sd_state_eq_det}, which is 
\begin{equation}
\label{eq:sd_state_eq_det_int}
%\evh{j}(k+1) \in 
\evh{j}(k) \oplus \Dtk \cdot \setpvhnb{j}{k}.
\end{equation}
Here $\oplus$ denotes the Minkowski sum, and $\Dtk \doteq \setpvhnb{j}{k} = \left\{\Dtk \cdot \pvhnb{j}(k), \forall \pvhnb{j}(k) \in \setpvhnb{j}{k} \right\}$.
However, not all $\evh{j}(k+1) \in \evh{j}(k) \oplus \Dtk \cdot \setpvhnb{j}{k}$ are feasible with respect to the energy constraint at ${k+1}$ \eqref{eq:sd_en_limit}.
Combining \eqref{eq:ecder_limits} and \eqref{eq:sd_state_eq_det} leads to 
\begin{equation}
\label{eq:complete_power_limit_energy_focus}
\evh{j}(k+1) \in \setevhrcomplete{j},
\end{equation}
where 
\begin{equation}
\label{eq:complete_power_limit_energy_focus_constraint_set}
\setevhrcomplete{j} \doteq \setevh{j}{k+1} \cap \left( \evh{j}(k) \oplus \Dtk \cdot \setpvhnb{j}{k}\right).
\end{equation}
The set $\setevhrcomplete{j}$ is the interval of feasible and reachable $\evh{j}(k+1)$ given $\evh{j}(k)$.
For the sake of readability, we will omit the explicit dependency of the interval $\setevhrcomplete{j}$ on $\evh{j}(k)$ in the sequel. 
Evidently, one needs to avoid cases in which the intersection in \eqref{eq:complete_power_limit_energy_focus_constraint_set} is the empty set. 
To this end, we consider the following assumption. 
\begin{ass}[Consistency of constraints] \label{req:meaningful_device_constraints_part1}
For all $j \in \mathcal{N}$ EC-DERs and all $k \in \mathcal{K}$, let the sampling time $\delta$ and the intervals $\setpvhnb{j}{k}$, $\setevh{j}{k}$ and $\setevh{j}{k+1}$ be such that for all $\evh{j}(k) \in \setevh{j}{k}$
\begin{align} \label{eq:ass1}
\setevh{j}{k+1} \cap \left( \evh{j}(k) \oplus \Dtk \cdot \setpvhnb{j}{k}\right) \neq \varnothing
%\setevh{j}{k+1} &\supseteq \bigcap_{\evh{j}(k) \in \setevh{j}{k}} \left( \evh{j}(k) + \Dtk \cdot \setpvhnb{j}{k} \right).
\end{align}
holds.
\end{ass}
This condition can be regarded as controlled forward invariance of the entire (time-varying) state constraint set $\setevh{j}{k}, k \in \mathcal{K}$.
Practically speaking, it can be satisfied by reducing the state constraints---i.e. if there are no feasible $\evh{j}(k+1)$ reachable from $\evh{j}(k)$, then this $\evh{j}(k)$ should be excluded from $\setevh{j}{k}$---or by enlarging the set $  \Dtk \cdot\setpvhnb{j}{k}$.

Under Assumption \ref{req:meaningful_device_constraints_part1}, the bounds of 
\begin{subequations}\label{eq:complete_power_limit_energy_focus_boundaries}
\begin{equation}
\setevhr{j}=\left[\lb{e}{j}^r(k+1), \ub{e}{j}^r(k+1)\right]
\end{equation}
are given by
\begin{align}
&\lb{e}{j}^r(k+1) = \max\left(\lb{e}{j}(k+1),\evh{j}(k) + \Dtk \cdot \lb{p}{j}(k)\right),\\
&\ub{e}{j}^r(k+1) = \min\left(\ub{e}{j}(k+1),\evh{j}(k) + \Dtk \cdot \ub{p}{j}(k)\right);
\end{align}
\end{subequations}
which follows from standard tools of interval arithemetics \cite{Jaulin01}.
Going back to Example \ref{ex:inf_states}, a sketch of $\setevhr{1}$ and $\setevhr{2}$ is shown in Fig. \ref{fig:set_example_intervals}a and Fig. \ref{fig:set_example_intervals}b.
Therein, these intervals are depicted by green crossed intervals.

\begin{figure}
\centering
\includegraphics[width=0.45\textwidth]{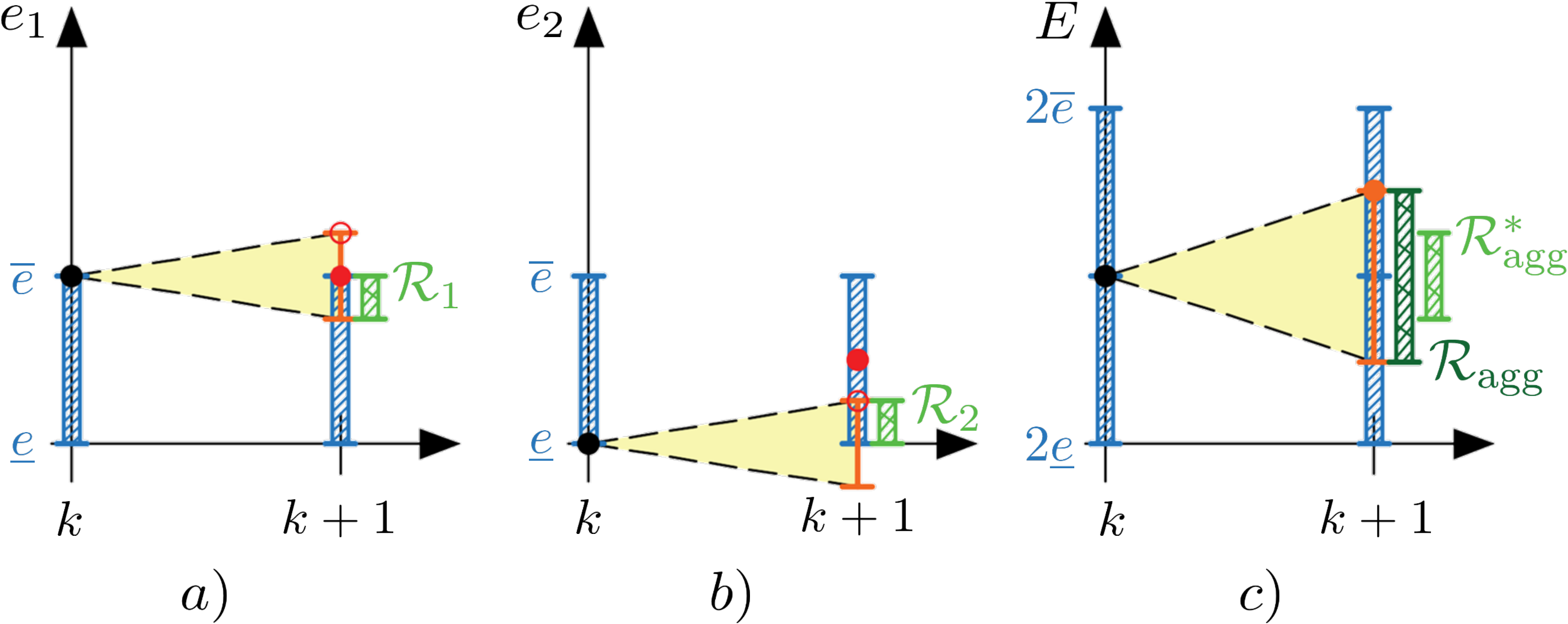}
\caption{\label{fig:set_example_intervals} Illustration of the energy constraints \eqref{eq:complete_power_limit_energy_focus} and \eqref{eq:complete_power_limit_energy_focus_agg} in Example \ref{ex:inf_states}.
}
\vspace{-0.4cm}
\end{figure}

An energy constraint summarizing the power and the energy constraints can be elaborated also for the aggregated states.
This model requires a joint power and energy constraint (per time step), i.e. 
\begin{equation}
\label{eq:complete_power_limit_energy_focus_agg}
E(k+1) \in \setevhraggcomplete,
\end{equation}
with
\begin{equation*}
\setevhraggcomplete \doteq \setevhagg{k+1} \cap \left( E(k) \oplus \Dtk \cdot \setpvhnbagg{k}\right).
\end{equation*}
Henceforth, we simplify the notation by dropping in the following the explicit dependency of $\setevhraggcomplete$ on $E(k)$. 
Similar to \eqref{eq:complete_power_limit_energy_focus_boundaries}, the bounds of
\begin{subequations}\label{eq:aggregated_bounds}
\begin{equation}
\setevhragg = \left[\lb{E}{}^r\rcekagg, \ub{E}{}^r\rcekagg \right]
\end{equation}
are  
\begin{align}
&\lb{E}{}^r(k+1) = \max\left(\lb{E}{}(k+1),E(k) + \Dtk \cdot \lb{P}{}(k)\right),\\
&\ub{E}{}^r(k+1) = \min\left(\ub{E}{}(k+1),E(k) + \Dtk \cdot \ub{P}{}(k)\right).
\end{align}
\end{subequations}

However, given $\evect(k)$, the actual aggregation of the feasible and reachable energy intervals of each EC-DER $\setevhr{j}$ gives
\begin{equation} \label{eq:direct_computation_of_agg}
\setevhstaragg \doteq s(\setevhr{}) % = \left[ \sum_{\mathclap{j\in \setEder}}\lb{e}{j}^r(k+1),\sum_{\mathclap{j\in \setEder}} \ub{e}{j}^r(k+1)\right],
\end{equation}
where $\setevhr{}$ is an hyperbox of $\mathbb{R}^N$ built upon the combined constraints of the single devices \eqref{eq:complete_power_limit_energy_focus_boundaries}, i.e.  
\begin{equation*}
\setevhr{} = \setevhr{1}  \times ... \times \setevhr{N} \subset \mathbb{R}^N. %\label{eq:sd_en_limit_vector}
\end{equation*}
The projection of set $\setevhr{}$  with $s$ from \eqref{eq:aggregating_function} is again to be understood as the point-wise image. This can be written as
\[
s(\setevhr{}) = \left[ \sum_{\mathclap{j\in \setEder}}\lb{e}{j}^r(k+1),~  \sum_{\mathclap{j\in \setEder}} \ub{e}{j}^r(k+1)\right].
\]
Next, we investigate the relation between the bounds of $\setevhragg$ and $\setevhstaragg$.
The application of \eqref{eq:agg_borders} to \eqref{eq:aggregated_bounds} implies the following inequalities 
\begin{subequations} \label{eq:joint_bounds_inequalities}
\begin{align}
%\max \left(\sum_{\mathcal{N}}\lb{e}{j}(k+1),\sum_{\mathcal{N}}\evh{j}(k) + \sum_{\mathcal{N}}\Dtk\lb{p}{j}(k)\right) \leq \sum_{\mathcal{N}} \max\left(\lb{e}{j}(k+1),\evh{j}(k) + \Dtk\lb{p}{j}(k)\right)
%\max\left(\lb{E}{}(k+1),E(k) + \Dtk\lb{P}{}(k)\right) \leq \sum_{\mathcal{N}} \max\left(\lb{e}{j}(k+1),\evh{j}(k) + \Dtk\lb{p}{j}(k)\right),\\
%\min\left(\ub{E}{}(k+1),E(k) + \Dtk\ub{P}{}(k)\right) \geq \sum_{\mathcal{N}} \min\left(\ub{e}{j}(k+1),\evh{j}(k) + \Dtk\ub{p}{j}(k)\right),
%&\lb{E}{}^r\rcekagg \leq \sum_{\mathclap{j\in \setEder}} \max\left(\lb{e}{j}(k+1),\evh{j}(k) + \Dtk\cdot\lb{p}{j}(k)\right),\\
%&\ub{E}{}^r\rcekagg \geq \sum_{\mathclap{j\in \setEder}} \min\left(\ub{e}{j}(k+1),\evh{j}(k) + \Dtk\cdot\ub{p}{j}(k)\right).
&\lb{E}{}^r\rcekagg \leq \sum_{\mathclap{j\in \setEder}}\lb{e}{j}^r(k+1),\\
&\ub{E}{}^r\rcekagg \geq \sum_{\mathclap{j\in \setEder}} \ub{e}{j}^r(k+1).
\end{align}
\end{subequations}
Consequently, we have that
\begin{equation}\label{eq:constraint_relaxed_large}
\setevhragg \supseteq \setevhstaragg.
\end{equation}
Specifically, there are values of $E(k+1)$ which are feasible for \eqref{eq:agg_pow_balance}-\eqref{eq:agg_en_limit} but not for \eqref{eq:constr_pow_balance_with_agg}-\eqref{eq:sum_energy_cons_with_agg} every time that one of the inequalities \eqref{eq:joint_bounds_inequalities} holds strictly.
Practically speaking, these are values of $E(k+1)$ that are feasible for the aggregated model but that cannot be dispersed into a feasible $\evect(k+1)$. 
This is the case of Example 1, illustrated in Fig. \ref{fig:set_example_intervals}c. 
It can be seen that the interval $\setevhstaragg$ is a ``narrower" subset of $\setevhragg$.
The $E(k+1)$ chosen in Example \ref{ex:inf_states} is feasible and reachable according to the aggregated model, as it is contained within $\setevhragg$.
However, this solution is unfeasible for the complete constraint set, because it lays \textit{outside} of the interval $\setevhstaragg$.

To summarize, the aggregated constraints from \eqref{eq:agg_borders} might enlarge the feasible space of $P(k)$ and $E(k)$.
Recalling the abstract problem of Section \ref{sec:abstract_problem}, this is the case if
\begin{equation}
\label{eq:general_set_inclusion}
\mathcal{Y}(k+1, \bold{y}(k)) \supseteq  h(\mathcal{X}(k+1, \bold{x}(k))),
\end{equation}
which violates condition \eqref{eq:general_set_equivalence} of Lemma \ref{lem:equiv}.
Subsequently we present sufficient conditions under which the aggregated constraints equal the projection of the original constraints.

\section{Ensuring Schedule Feasibility} \label{sec:feas_schedule}

From Lemma \ref{lem:equiv}, the aggregation preserves the optimality of the solution if and only if
\begin{equation} \label{eq:prop_agg_opt}
%\setevhragg \equiv \left[ \sum_{\mathclap{j\in \setEder}}\lb{e}{j}^r(k+1),\sum_{\mathclap{j\in \setEder}} \ub{e}{j}^r(k+1)\right],
\setevhragg \equiv \setevhstaragg.
\end{equation}
In the previous section we have showed that ${\setevhragg}$ obtained by application of \eqref{eq:agg_borders} does not guarantee \eqref{eq:prop_agg_opt}.
A natural consequence would be to find a different way to compute ${\setevhragg}$ such that \eqref{eq:prop_agg_opt} holds.
In contrast, we propose to pursue a reversed approach.
Instead of computing a ``narrower" ${\setevhragg}$ such that \eqref{eq:prop_agg_opt} is satisfied, we suppose the existence of an $\evect(k)$---whose elements sum up to the selected $E(k)$---for which \eqref{eq:prop_agg_opt} holds with ${\setevhragg}$ computed via \eqref{eq:agg_borders}. 
This is demonstrated in Example \ref{ex:f_states}. 
\begin{example} \label{ex:f_states}
Consider the setting of Example \ref{ex:inf_states}, but with $e_1(k) = e_2(k) = 0.5(\ub{e}{}+\lb{e}{})$. 
Fig. \ref{fig:set_example2} depicts this case.
Therein, $\setevhragg$ and ${\setevhstaragg}$ are equivalent and $E(k+1) = 2\ub{e}{}$---which is feasible and reachable for the aggregated model as discussed in Example \ref{ex:inf_states}---is feasible and reachable with respect to the constraints of each EC-DER: 
\begin{align*}
\pvhnb{1}(k) &= \ub{p}{} \in [\lb{p}{},\ub{p}{}], \qquad \pvhnb{2}(k) = \ub{p}{} \in [\lb{p}{},\ub{p}{}], \\
\evh{1}(k+1) &= 0.5(\ub{e}{}+\lb{e}{}) + \Dtk\cdot\ub{p}{} \in [\lb{e}{},\ub{e}{}],  \\ 
\evh{2}(k+1) &= 0.5(\ub{e}{}+\lb{e}{})+ \Dtk\cdot\ub{p}{} \in [\lb{e}{},\ub{e}{}].  \tag*{$\blacksquare$}% 
\end{align*} 
\end{example}

\begin{figure}
\centering
\includegraphics[width=0.45\textwidth]{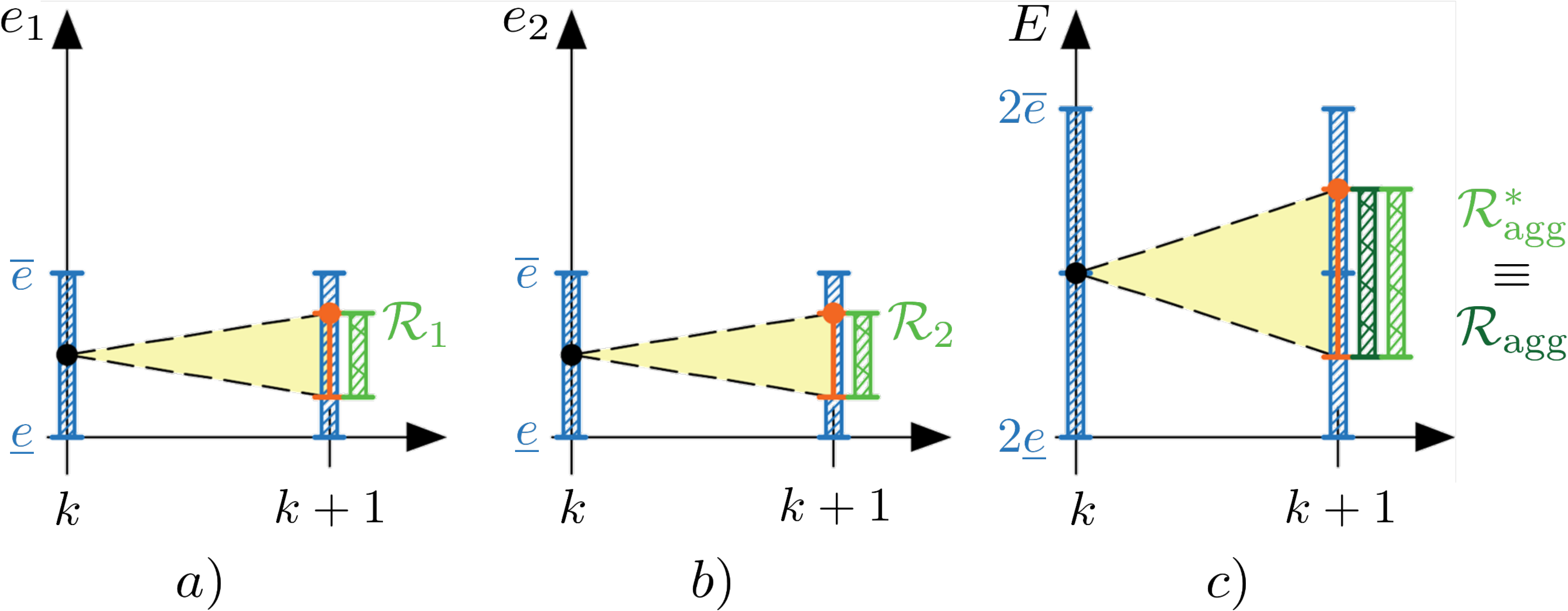}
\caption{Illustration of Example 2. \label{fig:set_example2} 
}
\vspace{-0.4cm}
\end{figure}

Comparison of Examples \ref{ex:inf_states} and \ref{ex:f_states} points out that certain feasible dispersions of $E(k)$ are preferable over others.
The energy states $e_1(k)$ and $e_2(k)$ are a feasible dispersion of aggregated energy $E(k)$ both in Example \ref{ex:inf_states} and in Example \ref{ex:f_states}. 
However, the values for $e_1(k)$ and $e_2(k)$ chosen in Example \ref{ex:inf_states} compromise the existence of a feasible dispersion of the desired (aggregated feasible) $E(k+1)$.
The same $E(k+1)$ has instead a feasible dispersion with the values of $e_1(k)$ and $e_2(k)$ chosen in Example \ref{ex:f_states}. 

Let us rephrase this observation using the abstraction from Section \ref{sec:abstract_problem}.
Given a value of $\bold{y}(k)$, we define % (with a slight abuse of notation)
\begin{equation*}
h^{-1}(\bold{y}(k))
%\mathcal{X}(k,\bold{y}(k)) 
= \left \lbrace \bold{x}(k) \in \mathcal{X}(k, \bold{x}(k-1)) \,|\, h(\bold{x}(k)) = \bold{y}(k) \right \rbrace
\end{equation*}
as the set of feasible pre-images of $\bold{y}(k)$ with respect to the projection $h$.
If the set $h^{-1}(\bold{y}(k))$ is not a singleton, each of its elements maps to $\bold{y}(k)$.
However, some of those can restrict the feasible set at the next time step, $\mathcal{X}(k+1, \bold{x}(k))$, more than others.
Hence, given a set $\mathcal{Y}(k+1, \bold{y}(k))$ for which \eqref{eq:general_set_inclusion} holds, satisfaction of \eqref{eq:general_set_equivalence} depends on $\bold{x}(k) \in h^{-1}(\bold{y}(k))$, because the set $h(\mathcal{X}(k+1, \bold{x}(k)))$ varies with $\bold{x}(k)$.  

Motivated by these considerations, we denote 
\begin{equation*}
\mathcal{E}(k,E(k)) \doteq \left\lbrace \evect(k) \in \setevh{}{k} \, |\, \evect(k) = s^{-1}(E(k)) \right\rbrace,
\end{equation*} 
as \emph{the set of feasible dispersions of $E(k)$}.
\begin{defi}[Consistent dispersion of $E(k)$] \label{def:cons_disp}
Given $E(k)$, the vector $\evect(k)$ is said to be \emph{a consistent dispersion of $E(k)$ at time $k+1$}, if $\evect(k) \in \mathcal{E}(k,E(k))$ and 
\begin{equation*}
\setevhragg \equiv \setevhstaragg.
\end{equation*}
with $\setevhragg$ as in \eqref{eq:aggregated_bounds} and $\setevhstaragg$ from \eqref{eq:direct_computation_of_agg}.
\QEDA
\end{defi}
The meaning of consistent $\evect(k)$ is illustrated in Fig. \ref{fig:consistentDispersion}.
Starting from the top-left corner, Fig. \ref{fig:consistentDispersion} shows that an aggregated state $E(k)$ can be dispersed in different ways, namely all the points in $\mathcal{E}(k,E(k))$.
Among them, a consistent dispersion of the aggregated state $E(k)$ leads to a feasible and reachable set $\setevhr{}$ (on the right) whose projection on the aggregated space, $\setevhstaragg$, is equivalent to the aggregated feasible and reachable set obtained via direct computation, i.e. from $E(k)$ and relative aggregated constraints as in \eqref{eq:agg_borders}.   
Considering the complete problem, the schedule $\{G(k)\}_{\mathcal{K}}$ can be computed via \eqref{eq:aggregated_opt_problem} as much as via \eqref{eq:complete_opt_problem_with_agg} if $\evect(k)$ is a consistent dispersion of $E(k)$ for all $k\in\mathcal{K}$, because the conditions of Lemma \ref{lem:equiv} are satisfied at each time step. 

\begin{figure}
\centering
\includegraphics[width=0.46\textwidth]{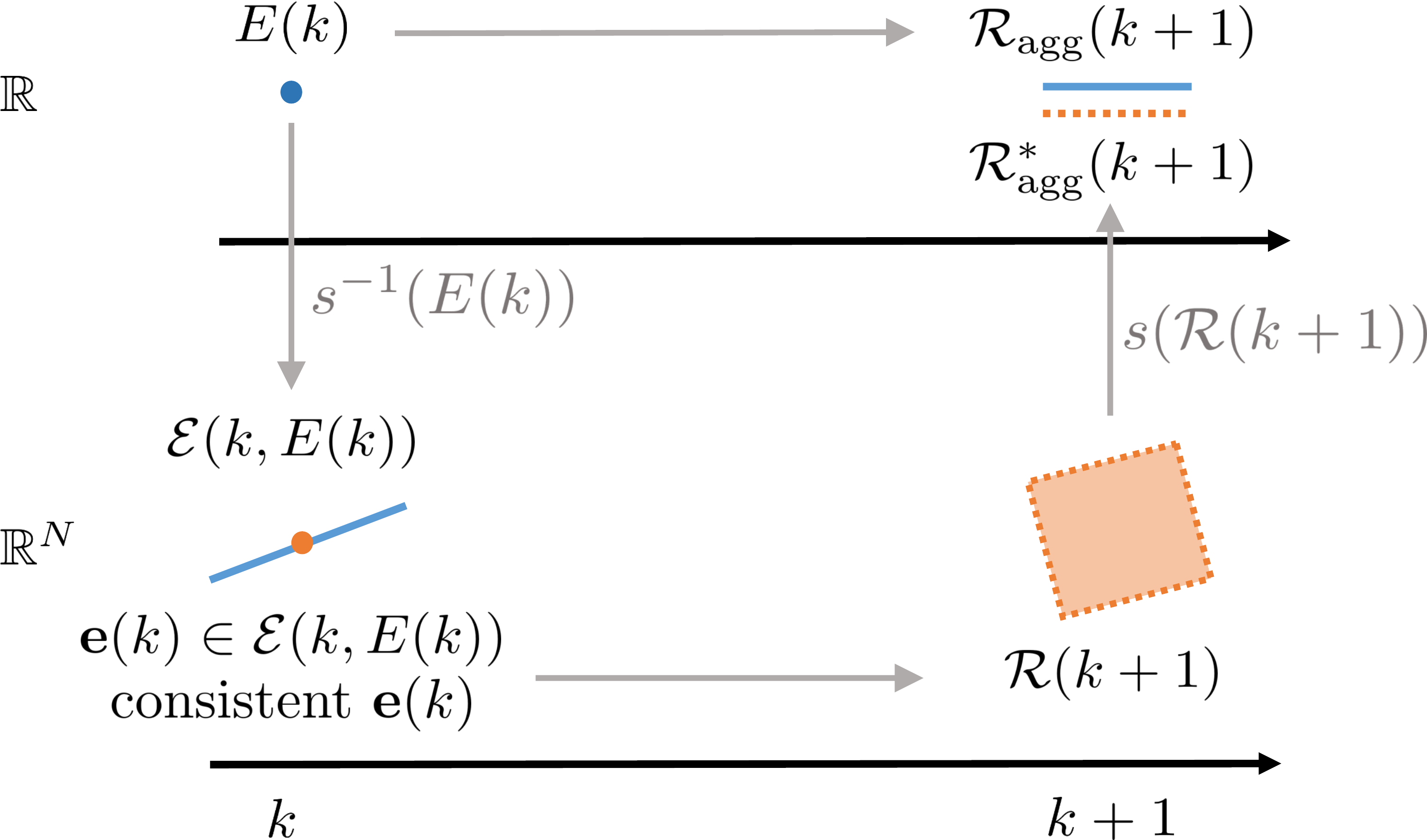}
\caption{Graphical representation of a consistent dispersion of $E(k)$.
\label{fig:consistentDispersion} 
}
\vspace{-0.4cm}
\end{figure}

\begin{rema}[Dispersions and two-stage scheduling]
Introducing the concept of a consistent dispersion helps understanding why \eqref{eq:aggregated_opt_problem} is successfully applied, cf. \cite{Xu16,Appino18c}.
Therein, hierarchical control is often used to solve \eqref{eq:complete_opt_problem_with_agg} in a sequential manner.
First, \eqref{eq:aggregated_opt_problem} is used (with \eqref{eq:agg_borders}) on an upper level to provide long-term aggregated schedules. 
Second, lower level feedback control (e.g. MPC) assigns the dispersion of the aggregated energy state in a way that guarantees the feasibility of the pre-computed schedule over the subsequent time-steps. 
Essentially, this controller consistently disperses the aggregated schedule.\QEDA
\end{rema}

\section{Existence and Computation of Consistent Dispersions} \label{sec:prove_agg_opt}

The considerations above motivate investigation of the existence of a consistent dispersion.
To this end, we make the following assumption.

\begin{ass} \label{req:meaningful_device_constraints_part2}
For all $j \in \mathcal{N}$  and all $k \in \mathcal{K}$, let the sampling time $\delta$ and the intervals $\setpvhnb{j}{k}$, $\setevh{j}{k}$ and $\setevh{j}{k+1}$ be such that 
there exists at least one $\evh{j}(k) \in \setevh{j}{k}$ for which
\begin{subequations}
\begin{align}
\evh{j}(k) \oplus \Dtk  \cdot \setpvhnb{j}{k} &\subseteq \setevh{j}{k+1}, \label{eq:ass2_1}\\
\setevh{j}{k+1} &\subseteq \left( \setevh{j}{k} \oplus \Dtk \cdot \setpvhnb{j}{k} \right), \label{eq:ass2_2}
\end{align}
\end{subequations}
holds. \QEDA
\end{ass}
Condition \eqref{eq:ass2_1} implies that there exists a controlled invariant (time-varying) subset of $\setevh{j}{k}$, and furthermore that there exists a subset of $\setevh{j}{k}$ from which all reachable states are  feasible.
The second part \eqref{eq:ass2_2} instead requires that the entire set $\setevh{j}{k+1}$ is contained in the reachable set $\setevh{j}{k} \oplus\Dtk \cdot \setpvhnb{j}{k}$. 

Moreover, consider
$\costDispLow: \mathbb{R} \times \mathbb{R}^N \to \mathbb{R}$  given by
\begin{subequations}
\label{eq:cost_function_ideal_state} \begin{multline}
\costDispLow(k,\evect(k))  = \sum_{j \in \setEder} - \bigg(\evh{j}(k) + \Dtk \cdot \underline{p}_j(k) +  \\
  \quad - \max\left\{\evh{j}(k) +{\Dtk} \cdot \underline{p}_j(k),\, \underline{e}_j(k+1) \right\}\bigg), 
 \end{multline}
and $\costDispUp:\mathbb{R} \times \mathbb{R}^N \to \mathbb{R}$ given by
 \begin{multline}
\costDispUp(k,\evect(k)) = \sum_{j \in \setEder} \bigg(\evh{j}(k) + \Dtk \cdot \overline{p}_j(k) +  \\
   \quad - \min\left\{\evh{j}(k) + {\Dtk} \cdot \overline{p}_j(k),\, \overline{e}_j(k+1)\right\}\bigg). 
\end{multline}
\end{subequations}

\begin{thm}[Existence of a consistent dispersion] \label{thm:theorem_1}~\\
Suppose Assumptions \ref{req:meaningful_device_constraints_part1} and \ref{req:meaningful_device_constraints_part2} hold. If
\begin{equation} \label{eq:problem_consistent_soc}
\evect(k) \in \argmin_{\evect(k) \in \mathcal{E}(k,E(k))} \left( \costDispLow(k,\evect(k)) + \costDispUp(k,\evect(k)) \right)
\end{equation}
with $\costDispLow$ and $\costDispUp$ from \eqref{eq:cost_function_ideal_state}, 
then $\setevhragg$ from \eqref{eq:aggregated_bounds} and $\setevhstaragg$ from \eqref{eq:direct_computation_of_agg} satisfy
\[
\setevhragg \equiv \setevhstaragg,
\]
 i.e. $\evect(k)$ is a consistent dispersion of $E(k)$ at time $k+1$.\QEDA
\end{thm}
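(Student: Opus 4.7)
My plan is to rephrase the objective in \eqref{eq:problem_consistent_soc} as an explicit measure of the ``gap'' between $\setevhragg$ and $\setevhstaragg$, and then to show via a constructive argument that, under Assumptions~\ref{req:meaningful_device_constraints_part1}--\ref{req:meaningful_device_constraints_part2}, this gap can always be driven to zero over $\mathcal{E}(k,E(k))$. First I would apply the identity $\max\{a,b\} = a + \max\{0, b-a\}$ to each summand in \eqref{eq:cost_function_ideal_state}, rewriting the cost terms as
\begin{align*}
\costDispLow(k,\evect(k)) &= \sum_{j\in\setEder}\max\bigl\{0,\, \lb{e}{j}(k+1) - \evh{j}(k) - \Dtk\lb{p}{j}(k)\bigr\}, \\
\costDispUp(k,\evect(k))  &= \sum_{j\in\setEder}\max\bigl\{0,\, \evh{j}(k) + \Dtk\ub{p}{j}(k) - \ub{e}{j}(k+1)\bigr\}.
\end{align*}
Applying the same identity to the per-device bounds \eqref{eq:complete_power_limit_energy_focus_boundaries} and to the aggregated bounds \eqref{eq:aggregated_bounds}, a short algebraic manipulation then produces the key decomposition
\begin{equation*}
\costDispLow + \costDispUp = \Bigl[\textstyle\sum_{j}\lb{e}{j}^r(k+1) - \lb{E}{}^r(k+1)\Bigr] + \Bigl[\ub{E}{}^r(k+1) - \textstyle\sum_{j}\ub{e}{j}^r(k+1)\Bigr] + C,
\end{equation*}
in which $C$ depends only on $E(k)$ and the aggregated bounds but not on the dispersion $\evect(k)$.

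Each bracketed term is non-negative by \eqref{eq:joint_bounds_inequalities}, and both vanish simultaneously precisely when the endpoints of $\setevhragg$ coincide with those of $\setevhstaragg$, i.e.\ exactly when \eqref{eq:prop_agg_opt} holds. The theorem therefore reduces to exhibiting some $\evect(k)\in\mathcal{E}(k,E(k))$ at which both brackets vanish: any such vector automatically minimises \eqref{eq:problem_consistent_soc} and is, by construction, a consistent dispersion of $E(k)$.

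The main obstacle is the constructive existence of such a zero-gap dispersion. Introducing the abbreviations $\alpha_j := \lb{e}{j}(k+1) - \Dtk\lb{p}{j}(k)$ and $\beta_j := \ub{e}{j}(k+1) - \Dtk\ub{p}{j}(k)$, the two brackets vanish iff all $\alpha_j - \evh{j}(k)$ share one sign and all $\evh{j}(k) - \beta_j$ share one sign. To certify that such an $\evect(k)$ exists, I would unpack Assumption~\ref{req:meaningful_device_constraints_part2}: condition \eqref{eq:ass2_1} forces some $\evh{j}(k) \in [\alpha_j,\beta_j]$, so in particular $\alpha_j \leq \beta_j$; condition \eqref{eq:ass2_2} evaluated at the endpoints of $\setevh{j}{k+1}$ yields $\lb{e}{j}(k) \leq \alpha_j$ and $\beta_j \leq \ub{e}{j}(k)$. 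Combined, these give the nested chain $\lb{e}{j}(k) \leq \alpha_j \leq \beta_j \leq \ub{e}{j}(k)$ for every $j\in\setEder$, which is precisely the slack required. A trichotomy on $E(k)\in\bigl[\sum_{j}\lb{e}{j}(k),\sum_{j}\ub{e}{j}(k)\bigr]$ then closes the argument: (i) if $E(k) \in \bigl[\sum_{j}\alpha_j,\sum_{j}\beta_j\bigr]$, pick each $\evh{j}(k) \in [\alpha_j,\beta_j]$ summing to $E(k)$, making both bracketed sums vanish termwise; (ii) if $E(k) > \sum_{j}\beta_j$, pick $\evh{j}(k) \in [\beta_j,\ub{e}{j}(k)]$ summing to $E(k)$, so that every $\evh{j}(k) \geq \beta_j \geq \alpha_j$ satisfies both sign conditions on the ``large'' side; (iii) the case $E(k) < \sum_{j}\alpha_j$ is symmetric, using $\evh{j}(k) \in [\lb{e}{j}(k),\alpha_j]$. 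In each case the box-by-box selection is realisable thanks to the chain above, hence a zero-gap dispersion is always available and $\setevhragg \equiv \setevhstaragg$ holds at every minimiser of \eqref{eq:problem_consistent_soc}.
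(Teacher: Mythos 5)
Your proof is correct, and it takes a genuinely different route from the one in the paper. The paper splits into Case (i)/(ii) according to which aggregated constraint (power or energy) is restrictive at each bound of $\setevhragg$ (Lemmas \ref{lem:case_i} and \ref{lem:case_ii}); Case (i) is handled by showing the relevant cost term attains its global minimum of zero, and Case (ii) by contradiction via an exchange argument that shifts a small increment $\Delta$ of energy between a ``violating'' and a ``slack'' device to strictly decrease the cost. You instead establish the exact identity
\begin{equation*}
\costDispLow + \costDispUp = \Bigl(\sum_{j}\lb{e}{j}^r(k+1) - \lb{E}{}^r(k+1)\Bigr) + \Bigl(\ub{E}{}^r(k+1) - \sum_{j}\ub{e}{j}^r(k+1)\Bigr) + C,
\end{equation*}
with $C$ independent of the dispersion; I verified the algebra ($\costDispLow = \sum_j \lb{e}{j}^r(k+1) - E(k) - \Dtk\,\lb{P}{}(k)$ together with $\lb{E}{}^r(k+1) = E(k)+\Dtk\,\lb{P}{}(k) + \max\{0,\lb{E}{}(k+1)-E(k)-\Dtk\,\lb{P}{}(k)\}$, and symmetrically for $\costDispUp$), and both brackets are indeed non-negative by \eqref{eq:joint_bounds_inequalities} and vanish together exactly when $\setevhragg\equiv\setevhstaragg$. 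The theorem thus reduces to exhibiting one feasible zero-gap dispersion, and your chain $\lb{e}{j}(k)\leq\alpha_j\leq\beta_j\leq\ub{e}{j}(k)$ is a correct unpacking of Assumption \ref{req:meaningful_device_constraints_part2}, so the trichotomy on $E(k)$ closes the argument (using, implicitly, that $\mathcal{E}(k,E(k))\neq\varnothing$, which is already presupposed by the $\argmin$ in the theorem's hypothesis). What your route buys: the minimized objective is revealed to be, up to a dispersion-independent constant, precisely the gap between the two intervals, which makes the conclusion transparent without any contradiction step; moreover your constructive trichotomy simultaneously proves that the minimum is attained by a zero-gap point, thereby subsuming the existence argument that the paper defers to Theorem \ref{thm:existence_of_ideal_states} via the extreme value theorem. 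What the paper's case analysis buys in exchange is an explicit physical reading of which constraint (input/power versus state/energy) is binding in each regime.
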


Note that Assumption \ref{req:meaningful_device_constraints_part2} is quite mild. 
The left hand side set relation can, for example, be enforced by reducing the sampling period $\delta$. 
The right hand side relation requires exclusion of all the values of $\evect(k+1)$ which are feasible from an energy perspective but that cannot be reached from any feasible $\evect(k)$.
Similar to the case of Assumption \ref{req:meaningful_device_constraints_part1}, this latter aspect is an additional modeling effort that is normally avoided, but that is fundamental in case of aggregation.  

Furthermore, observe that cost function $\costDispLow(k,\evect(k))+\costDispUp(k,\evect(k))$ as stated in Theorem \ref{thm:theorem_1} involves the energy state $\evect(k)$ \textit{only} at time step $k$.
This means that the computation of a consistent dispersion is independent from the knowledge of $\evect(h)$ with $h \neq k$.
Thus, $\evect(k)$ (and consequently $\pvect(k-1)$) can be determined at each $k$ by a lower-level controller without any need for information about past or future states of the system.
This has two important consequences: 
On one hand, it avoids any dependency between dispersion at subsequent time instants in the scheduling problem, which complicates dealing with eventual random decision variables and parameters. 
On the other hand, the computation of $\evect(k)$ does not require any load forecasts nor MPC strategies (the long-term perspective is already accounted for in the aggregated schedule).  
Theorem \ref{thm:theorem_1} leads to the implication of interest. 

\begin{thm}[Recursive existence of consistent dispersions] \label{thm:existence_of_ideal_states}
Suppose Assumptions \ref{req:meaningful_device_constraints_part1} and \ref{req:meaningful_device_constraints_part2} hold.
If $E(k+1)\in \setevhagg{k+1}$ and $\evect(k)$ is a consistent dispersion of $E(k)$, then there exists at least one consistent dispersion of $E(k+1)$.\QEDA
\end{thm}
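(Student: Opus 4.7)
The plan is to bootstrap from Theorem \ref{thm:theorem_1}: once we show that $\mathcal{E}(k+1, E(k+1))$ is non-empty, Theorem \ref{thm:theorem_1} immediately produces a consistent dispersion of $E(k+1)$ as a minimizer of the problem \eqref{eq:problem_consistent_soc}. Hence the entire task reduces to exhibiting at least one feasible dispersion of $E(k+1)$.

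First, I would unpack the hypothesis that $\evect(k)$ is a consistent dispersion of $E(k)$: by Definition \ref{def:cons_disp}, this yields the set equality
\[
\setevhragg \equiv \setevhstaragg = s(\setevhr{}),
\]
where the right equality is \eqref{eq:direct_computation_of_agg}. In other words, every aggregated energy value that is feasible and reachable from $E(k)$ according to the aggregated model can be realized as the sum of individually feasible and reachable energy states $\evh{j}(k+1)\in\setevhr{j}$.

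Second, I would argue that in the context of the aggregated scheduling problem \eqref{eq:aggregated_opt_problem} the hypothesis $E(k+1)\in\setevhagg{k+1}$ is paired with the aggregated dynamics \eqref{eq:agg_state_eq_det} and the power bound $P(k)\in\setpvhnbagg{k}$, so implicitly $E(k+1)\in E(k)\oplus\Dtk\cdot\setpvhnbagg{k}$ and therefore $E(k+1)\in\setevhragg$. Combining with the set equality from the previous step, $E(k+1)\in s(\setevhr{})$, so there exists $\evect(k+1)\in\setevhr{}$ with $s(\evect(k+1))=E(k+1)$. Since $\setevhr{j}\subseteq\setevh{j}{k+1}$ for every $j\in\setEder$ by the definition of $\setevhr{j}$ in \eqref{eq:complete_power_limit_energy_focus_constraint_set}, this $\evect(k+1)$ lies in $\setevh{}{k+1}$ and thus in $\mathcal{E}(k+1, E(k+1))$; consequently, the set of feasible dispersions at step $k+1$ is non-empty.

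Finally, since Assumptions \ref{req:meaningful_device_constraints_part1} and \ref{req:meaningful_device_constraints_part2} hold uniformly in $k$, I would invoke Theorem \ref{thm:theorem_1} at time $k+1$: the cost $\costDispLow(k+1,\cdot)+\costDispUp(k+1,\cdot)$ is continuous on the non-empty compact set $\mathcal{E}(k+1,E(k+1))$ (an intersection of a hyperbox with the affine constraint $s(\evect(k+1))=E(k+1)$), so a minimizer of \eqref{eq:problem_consistent_soc} at $k+1$ exists and, by Theorem \ref{thm:theorem_1}, is a consistent dispersion of $E(k+1)$. The main obstacle is the subtle interpretive point in the middle step: the theorem assumes only $E(k+1)\in\setevhagg{k+1}$, so it must be made explicit (or inherited from the scheduling context via the aggregated dynamics and the aggregated power bound) that $E(k+1)$ is reachable from $E(k)$, because it is precisely the reachability combined with the consistency of $\evect(k)$ that makes the inclusion $E(k+1)\in s(\setevhr{})$ work and thereby translates aggregated feasibility into individual-level feasibility.
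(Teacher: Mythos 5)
Your proof follows the same route as the paper's: establish that $\mathcal{E}(k+1,E(k+1))$ is non-empty, invoke the extreme value theorem for the continuous cost over this compact set to obtain a minimizer of \eqref{eq:problem_consistent_soc} at step $k+1$, and conclude via Theorem \ref{thm:theorem_1}. The paper compresses the non-emptiness step into a single ``by construction'' appeal to Definition \ref{def:cons_disp}, whereas you spell it out through the inclusion $E(k+1)\in s(\setevhr{})$; your closing remark that this requires $E(k+1)$ to be reachable from $E(k)$ under the aggregated power bound (which is implicit in the scheduling context but not in the bare hypothesis $E(k+1)\in\setevhagg{k+1}$) is a fair and accurate reading of what that phrase hides.
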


An important consequence of Theorem \ref{thm:existence_of_ideal_states} is that given that the initial $\evect(0)$ is a consistent dispersion of $E(1)$, then a consistent dispersion of $\{P(k)\}_{\mathcal{K}}$ and $\{E(k+1)\}_{\mathcal{K}}$ is always possible as long as \eqref{eq:agg_pow_limit}-\eqref{eq:agg_en_limit} hold with \eqref{eq:agg_borders}.
In other words, the aggregated schedule can always be tracked. 
Furthermore, note that while the existence of a consistent dispersion is a property of the entire system, the requirements of Assumptions \ref{req:meaningful_device_constraints_part1} and \ref{req:meaningful_device_constraints_part2} involve each EC-DER \textit{separately}, in line with the task of aggregating heterogeneous devices. 

Besides justifying the use of aggregated scheduling \eqref{eq:aggregated_opt_problem} in presence of a consistent dispersion of the aggregated energy state, Theorem \ref{thm:theorem_1} and Theorem \ref{thm:existence_of_ideal_states} lead to further considerations in more general problems. 
We will discuss this in Section \ref{sec:limits_and_adv_of_agg_model} after providing the proofs in the next Section. 

\subsection{Proofs of Theorem \ref{thm:theorem_1} \& \ref{thm:existence_of_ideal_states}}
\begin{proof}[Proof of  Theorem \ref{thm:theorem_1}]~\\
The main idea behind the proof of  Theorem \ref{thm:theorem_1} is showing that  condition $\setevhragg \equiv \setevhstaragg$ is satisfied because the bounds of the real intervals $\setevhragg$ and $\setevhstaragg$ coincide when $\evect(k)$ satisfies \eqref{eq:problem_consistent_soc}. 
First we prove technical lemmata to then turn towards the proof of Theorem \ref{thm:theorem_1}. 
 
From \eqref{eq:aggregated_bounds} it follows that there are two possible cases for each of the two bounds of $\setevhragg$:
\begin{itemize}
	\item Case (i):  
 	\begin{subequations}
 		\begin{equation} E(k) + \Dtk \cdot \underline{P}(k) \geq {\underline{E}(k+1)}, \label{eq:case1_low} \end{equation}
 	respectively, 
   		\begin{equation} E(k) + {\Dtk} \cdot \overline{P}(k) \leq {\overline{E}(k+1)}. \label{eq:case1_up} \end{equation}
    \end{subequations}
 	\item Case (ii):  
 	\begin{subequations}
 		\begin{equation} E(k) + {\Dtk} \cdot \overline{P}(k) > {\overline{E}(k+1)}, \label{eq:case2_low}  \end{equation}
 	respectively,  
 		\begin{equation}E(k) + \Dtk \cdot \underline{P}(k) < {\underline{E}(k+1)}.  \label{eq:case2_up}\end{equation}
   	\end{subequations}
\end{itemize}
Observe that Case (i) implies that at time $k$ the aggregated input (power) constraint \eqref{eq:agg_pow_limit} is restrictive (i.e. potentially active), while Case (ii) implies that at time $k+1$ the aggregated state (energy) constraint  \eqref{eq:agg_en_limit} is restrictive. 
We analyze both cases in two technical lemmata. 
\begin{lemm}[Case (i)] \label{lem:case_i}
Suppose Assumptions \ref{req:meaningful_device_constraints_part1} and \ref{req:meaningful_device_constraints_part2} hold and let $\evect(k)$ satisfy \eqref{eq:problem_consistent_soc}.
\begin{enumerate}[label=(\alph*)]
\item 	If \eqref{eq:case1_low} holds, 
		then the lower bound $\lb{E}{}^r\rcekagg$ of  $\setevhragg$ from \eqref{eq:aggregated_bounds} and the lower bound ${\sum_{{j\in \setEder}}\lb{e}{j}^r(k+1)}$ of $\setevhstaragg$ from 		\eqref{eq:direct_computation_of_agg} satisfy 
		 \begin{equation*}
		 \lb{E}{}^r\rcekagg = \sum_{\mathclap{j\in \setEder}}\lb{e}{j}^r(k+1). 
		 \end{equation*}
\item If \eqref{eq:case1_up} holds,
		then the upper bound $\ub{E}{}^r\rcekagg$ of  $\setevhragg$ from \eqref{eq:aggregated_bounds} and the upper bound ${\sum_{{j\in \setEder}}\ub{e}{j}^r(k+1)}$ of $\setevhstaragg$ from 		\eqref{eq:direct_computation_of_agg} satisfy 
		 \begin{equation*}
		 \ub{E}{}^r\rcekagg = \sum_{\mathclap{j\in \setEder}}\ub{e}{j}^r(k+1). 
		 \end{equation*}
\end{enumerate}
\end{lemm}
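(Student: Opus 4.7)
I prove part (a); part (b) follows by the symmetric argument (replacing lower by upper bounds and reversing inequalities). Under \eqref{eq:case1_low} the definition \eqref{eq:aggregated_bounds} collapses to $\lb{E}{}^r(k+1) = E(k) + \delta \lb{P}{}(k) = \sum_{j}\bigl(\evh{j}(k) + \delta \lb{p}{j}(k)\bigr)$, whereas \eqref{eq:complete_power_limit_energy_focus_boundaries} gives $\sum_{j}\lb{e}{j}^r(k+1) = \sum_{j}\max\{\lb{e}{j}(k+1),\, \evh{j}(k)+\delta \lb{p}{j}(k)\} \geq \sum_{j}\bigl(\evh{j}(k) + \delta \lb{p}{j}(k)\bigr)$, with equality if and only if $\evh{j}(k) + \delta \lb{p}{j}(k) \geq \lb{e}{j}(k+1)$ for every $j\in\mathcal{N}$. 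Hence the target reduces to showing that the minimizer $\evect(k)$ of \eqref{eq:problem_consistent_soc} satisfies this per-device inequality.

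I would first rewrite the cost terms \eqref{eq:cost_function_ideal_state} in the transparent form $\costDispLow(k,\evect(k)) = \sum_{j}\max\{0,\, a_j - \evh{j}(k)\}$ and $\costDispUp(k,\evect(k)) = \sum_{j}\max\{0,\, \evh{j}(k) - b_j\}$, where I set $a_j \doteq \lb{e}{j}(k+1) - \delta \lb{p}{j}(k)$ and $b_j \doteq \ub{e}{j}(k+1) - \delta \ub{p}{j}(k)$. Each per-device contribution to $\costDispLow + \costDispUp$ is then a convex piecewise-linear ``V-shape'' vanishing on the flat interval $[a_j, b_j]$. Assumption \ref{req:meaningful_device_constraints_part2}, \eqref{eq:ass2_1}, yields $a_j \leq b_j$ (the guaranteed $\evh{j}(k)$ lies in both half-lines simultaneously), and \eqref{eq:ass2_2} yields $[a_j,b_j] \subseteq [\lb{e}{j}(k), \ub{e}{j}(k)]$; so each flat region is non-empty and contained in the feasibility box of $\evh{j}(k)$. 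Moreover, \eqref{eq:case1_low} is precisely the aggregated slack condition $E(k) \geq \sum_{j} a_j$.

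The core of the argument is an exchange/contradiction step. Suppose the minimizer $\evect(k)$ had $\evh{j^{*}}(k) < a_{j^{*}}$ for some $j^{*}$. Since $\sum_{j}\evh{j}(k) = E(k) \geq \sum_{j} a_j$, there must exist $j' \neq j^{*}$ with $\evh{j'}(k) > a_{j'}$. Transferring a mass $\epsilon>0$ from $j'$ to $j^{*}$ preserves the sum constraint, and for sufficiently small $\epsilon$ it preserves the box constraints $[\lb{e}{j}(k),\ub{e}{j}(k)]$ thanks to the inclusions established above. A short case split on whether $\evh{j'}(k) \leq b_{j'}$ (where $j'$'s V-shape remains at zero, so the cost at $j'$ does not increase) or $\evh{j'}(k) > b_{j'}$ (where the cost at $j'$ strictly decreases) shows that in both cases the V-shape at $j^{*}$ strictly decreases by $\epsilon$, producing a net strict decrease in the objective. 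This contradicts optimality. Therefore $\evh{j}(k) \geq a_j$ for all $j$, i.e.\ $\costDispLow(k,\evect(k))=0$, which gives $\lb{e}{j}^r(k+1) = \evh{j}(k) + \delta \lb{p}{j}(k)$ for each $j$ and the claimed identity on summing over $\mathcal{N}$.

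The main obstacle I anticipate is the non-smoothness of the objective, which prevents a direct KKT-style argument; the exchange step must therefore be executed with care, in particular verifying in each case that $\epsilon$ can be chosen small enough so that (i) the box constraints on $\evh{j^{*}}(k)$ and $\evh{j'}(k)$ remain satisfied, and (ii) the cost contribution at $j'$ does not increase. Both reductions rely crucially on the two inclusions supplied by Assumption \ref{req:meaningful_device_constraints_part2}, which is why this assumption enters the proof in an essential way.
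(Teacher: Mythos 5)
Your proof is correct and follows essentially the same route as the paper's: both reduce part (a) to showing that the minimizer of \eqref{eq:problem_consistent_soc} satisfies $\evh{j}(k)+\Dtk\cdot\underline{p}_j(k)\geq\underline{e}_j(k+1)$ for every $j$, using Assumption \ref{req:meaningful_device_constraints_part2} to guarantee that the per-device ``flat'' intervals are non-empty and feasible, whence $\costDispLow$ vanishes at the optimum and the two lower bounds coincide by direct summation. Your explicit $\epsilon$-transfer contradiction is simply a more carefully executed version of the paper's observation that pushing some $\evh{j}(k)$ below its threshold cannot reduce $\costDispUp$ while it strictly increases $\costDispLow$, so no gap or genuinely different idea is involved.
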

\begin{proof}
First, consider the lower-bound case (a), i.e. that \eqref{eq:case1_low} holds. 
From $ E(k) + \Dtk \cdot \underline{P}(k) \geq {\underline{E}(k+1)}$, we have
\[
E(k) \geq - \Dtk \cdot \underline{P}(k) + \underline{E}(k+1).
\] 
Given Assumption \ref{req:meaningful_device_constraints_part1} this implies that any $\evect(k)$ satisfying
\begin{equation} \label{eq:conditions_min_case1_lower}
{e_j(k) \geq - \Dtk \cdot \underline{p}_j(k) + \underline{e}_j(k+1)} \quad \forall j \in \setEder,
\end{equation}
is a feasible dispersion of $E(k)$; i.e. $\evect(k)$ is in the set $\mathcal{E}(k,E(k))$.
 
Observe that $\costDispLow(k,\evect(k))$ reaches its global minimum zero for any $\evect(k)$ satisfying \eqref{eq:conditions_min_case1_lower}.
Furthermore, from Assumption \ref{req:meaningful_device_constraints_part2} it holds that
\begin{equation} \label{eq:proof_consequence_ass2}
\ub{e}{j}(k+1) - \Dtk \cdot \ub{p}{j}(k) \geq \lb{e}{j}(k+1) - \Dtk \cdot \lb{p}{j}(k).
\end{equation}
This implies that diminishing the values of any $e_j(k)$ below ${-\Dtk \cdot \underline{p}_j(k) + \underline{e}_j(k+1)}$ does not reduce the value of $\costDispUp(\evect(k),k)$. 
Therefore, \eqref{eq:problem_consistent_soc} for Case (i)-(a) implies \eqref{eq:conditions_min_case1_lower}.

In turn, it follows from \eqref{eq:conditions_min_case1_lower} that
\[
{e_j(k) + \Dtk \cdot \underline{p}_j(k) \geq \underline{e}_j(k+1)} \quad \forall j \in \setEder.
\]
Therefore, the lower bound of $\setevhstaragg$ is equal to 
\[
\sum_{\mathclap{j\in \setEder}}\lb{e}{j}^r(k+1) = \sum_{\mathclap{j\in \setEder}} \evh{j}(k) + \Dtk \cdot \sum_{\mathclap{j\in \setEder}} \lb{p}{j}(k) = \lb{E}{}^r\rcekagg.
\]
The second part of Case (i)---$E(k) + {\Dtk} \cdot \overline{P}(k) \leq {\overline{E}(k+1)}$---follows mutatis mutandis and it is skipped for the sake of brevity. 
\end{proof}
\begin{lemm}[Case (ii)]\label{lem:case_ii}
Suppose Assumptions \ref{req:meaningful_device_constraints_part1} and \ref{req:meaningful_device_constraints_part2} hold and let $\evect(k)$ satisfy \eqref{eq:problem_consistent_soc}.
\begin{enumerate}[label=(\alph*)]
\item If \eqref{eq:case2_low} holds, 
	then the lower bound $\lb{E}{}^r\rcekagg$ of  $\setevhragg$ from \eqref{eq:aggregated_bounds} and the lower bound ${\sum_{{j\in \setEder}}\lb{e}{j}^r(k+1)}$ of $\setevhstaragg$ from 	\eqref{eq:direct_computation_of_agg} satisfy 
	\begin{equation*}
 		\lb{E}{}^r\rcekagg = \sum_{\mathclap{j\in \setEder}}\lb{e}{j}^r(k+1). 
 	\end{equation*}
\item If \eqref{eq:case2_up} holds,
	then the upper bound $\ub{E}{}^r\rcekagg$ of  $\setevhragg$ from \eqref{eq:aggregated_bounds} and the upper bound ${\sum_{{j\in \setEder}}\ub{e}{j}^r(k+1)}$ of $\setevhstaragg$ from 		\eqref{eq:direct_computation_of_agg} satisfy 
		 \begin{equation*}
		 \ub{E}{}^r\rcekagg = \sum_{\mathclap{j\in \setEder}}\ub{e}{j}^r(k+1). 
		 \end{equation*}
\end{enumerate} 
 \end{lemm}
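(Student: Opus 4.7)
My plan is to follow the strategy of the proof of Lemma \ref{lem:case_i}, now exploiting that Case (ii) corresponds to a binding aggregated state constraint, which forces one of the two cost terms in \eqref{eq:problem_consistent_soc} to possess a strictly positive minimum over $\mathcal{E}(k,E(k))$. Setting $a_j \doteq e_j(k) + \Dtk \underline{p}_j(k) - \underline{e}_j(k+1)$ and $b_j \doteq e_j(k) + \Dtk \overline{p}_j(k) - \overline{e}_j(k+1)$, a direct rewriting yields $\costDispLow(k,\evect(k)) = \sum_j [-a_j]^+$ and $\costDispUp(k,\evect(k)) = \sum_j [b_j]^+$, where $[\,\cdot\,]^+ \doteq \max\{\cdot,0\}$. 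Because $s(\evect(k)) = E(k)$ is held fixed on $\mathcal{E}(k,E(k))$, the sums $\sum_j a_j = E(k) + \Dtk \underline{P}(k) - \underline{E}(k+1)$ and $\sum_j b_j = E(k) + \Dtk \overline{P}(k) - \overline{E}(k+1)$ are constants on that set. I would treat part (a) in detail; part (b) follows by a fully symmetric argument with bars swapped.

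For part (a), the hypothesis \eqref{eq:case2_low} gives $\sum_j b_j > 0$, and summing the per-device inequality \eqref{eq:proof_consequence_ass2} over $j \in \setEder$ yields $\sum_j a_j > 0$ as well, so that \eqref{eq:aggregated_bounds} forces $\lb{E}{}^r\rcekagg = E(k) + \Dtk \underline{P}(k)$. The elementary inequalities $\sum_j [b_j]^+ \geq \sum_j b_j$ (equality iff $b_j \geq 0$ for every $j$) and $\sum_j [-a_j]^+ \geq 0$ (equality iff $a_j \geq 0$ for every $j$) furnish global lower bounds for each cost term. The structural observation that closes the argument is that the minimizers of the two terms are compatible: any $\evect(k)$ with $b_j \geq 0$ for all $j$ automatically has $a_j \geq 0$ for all $j$, since \eqref{eq:proof_consequence_ass2} gives $e_j(k) \geq \overline{e}_j(k+1) - \Dtk \overline{p}_j(k) \geq \underline{e}_j(k+1) - \Dtk \underline{p}_j(k)$. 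Hence the joint lower bound $\sum_j b_j$ on $\costDispLow + \costDispUp$ is attainable, and every minimizer of \eqref{eq:problem_consistent_soc} must satisfy $a_j \geq 0$ and $b_j \geq 0$ for every $j$. Substituting $a_j \geq 0$ into \eqref{eq:complete_power_limit_energy_focus_boundaries} gives $\lb{e}{j}^r(k+1) = e_j(k) + \Dtk \underline{p}_j(k)$, which sums to $E(k) + \Dtk \underline{P}(k) = \lb{E}{}^r\rcekagg$, as claimed. Part (b) proceeds analogously from \eqref{eq:case2_up}, which yields $\sum_j a_j < 0$ and then $\sum_j b_j < 0$ via \eqref{eq:proof_consequence_ass2}, forcing $b_j \leq 0$ at every minimizer and hence $\ub{e}{j}^r(k+1) = e_j(k) + \Dtk \overline{p}_j(k)$.

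The main obstacle I anticipate is the rigorous justification of two interlinked claims: first, that the set $\{\evect(k) \in \mathcal{E}(k,E(k)) \mid b_j \geq 0~\forall j\}$ is nonempty, so that the global infimum of $\costDispLow + \costDispUp$ is indeed $\sum_j b_j$; second, that every minimizer actually lies in this set, as opposed to attaining the infimum only in the limit or at a configuration with cancellations. The first claim leans on the reachability clause \eqref{eq:ass2_2} of Assumption \ref{req:meaningful_device_constraints_part2}, which yields $\overline{e}_j(k+1) - \Dtk \overline{p}_j(k) \leq \overline{e}_j(k)$, and on $\sum_j b_j > 0$ together with $E(k) \in \setevhagg{k}$ to verify the interval-arithmetic compatibility of the per-device lower bounds with the aggregate $\sum_j e_j(k) = E(k)$. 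The second claim rests on a local perturbation argument: if a minimizer had $b_i > 0$ and $b_j < 0$ simultaneously, then a small mass transfer from $i$ to $j$ (feasible for $\epsilon$ sufficiently small in the interior of the device boxes) would strictly lower $\costDispUp$ by $\epsilon$ without increasing $\costDispLow$, contradicting optimality. This perturbation step is the direct analogue of the ``does not reduce $\costDispUp$'' step in the proof of Lemma \ref{lem:case_i}, and handling the boundary cases where the transfer meets a per-device box constraint is the most delicate piece that the detailed proof will have to address.
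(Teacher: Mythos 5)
Your $\max\{\cdot,0\}$ decomposition of the cost via $a_j$ and $b_j$, and the ``attainable joint lower bound'' argument, are sound; the existence issue you flag does close, since Assumption \ref{req:meaningful_device_constraints_part2} yields $\underline{e}_j(k)\le \underline{e}_j(k+1)-\Dtk\,\underline{p}_j(k)\le \overline{e}_j(k+1)-\Dtk\,\overline{p}_j(k)\le\overline{e}_j(k)$ for every $j$, so the relevant per-device sub-intervals are nonempty and the sum map over a box attains every value between the summed endpoints. The genuine problem is which hypothesis you attach to which bound. Your own first step shows that \eqref{eq:case2_low}, i.e. $\sum_j b_j>0$, implies $\sum_j a_j>0$, i.e. \eqref{eq:case1_low}; the ``lower bound'' conclusion you then derive is therefore already Lemma \ref{lem:case_i}(a), and likewise your part (b) is subsumed by Lemma \ref{lem:case_i}(b). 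What Lemma \ref{lem:case_ii} must actually supply---and what the paper's own contradiction/exchange proof supplies, its equation citations notwithstanding---are the \emph{crossed} pairings: under \eqref{eq:case2_up} ($\sum_j a_j<0$) the \emph{lower} bounds coincide, the common value being $\underline{E}(k+1)$, which requires showing that every minimizer has $a_j\le 0$ for all $j$; and under \eqref{eq:case2_low} ($\sum_j b_j>0$) the \emph{upper} bounds coincide at $\overline{E}(k+1)$, requiring $b_j\ge 0$ for all $j$. These are precisely the cases not covered by Lemma \ref{lem:case_i} and the ones needed to assemble Theorem \ref{thm:theorem_1}; with your pairing they remain unproven.

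The good news is that your machinery transfers verbatim to the correct cases. Under \eqref{eq:case2_up} one has $\sum_j b_j\le\sum_j a_j<0$; the global lower bound of the cost is $-\sum_j a_j$ (since $\costDispLow\ge-\sum_j a_j$ with equality iff all $a_j\le0$, and $\costDispUp\ge0$ with equality iff all $b_j\le0$, the former condition implying the latter); it is attained on the nonempty set of feasible dispersions satisfying $e_j(k)\le \underline{e}_j(k+1)-\Dtk\,\underline{p}_j(k)$ for all $j$; hence every minimizer has $a_j\le0$ for all $j$, giving $\sum_{j}\lb{e}{j}^r(k+1)=\sum_j\underline{e}_j(k+1)=\underline{E}(k+1)=\lb{E}{}^r\rcekagg$. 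The symmetric argument handles \eqref{eq:case2_low} and the upper bound. This global-bound route is arguably cleaner than the paper's pairwise mass-transfer contradiction, which must separately verify that the transfer stays feasible and does not increase the other cost term---but you need to re-aim it at the crossed cases for the lemma to do its job.
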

 \begin{proof}[Proof (by contradiction)]
First, consider the lower-bound case (b), i.e. that \eqref{eq:case2_low} holds.  
Assume for the sake of contradiction that there is a group of devices $\mathcal{I} \subset \setEder$ for which 
\begin{equation} \label{eq:proof_Idevice_inequality}
e_i(k) + \Dtk \cdot \underline{p}_i(k) > {\underline{e}_i(k+1)}, \quad \forall i \in\mathcal{I}
\end{equation} 
and another group of devices $\mathcal{H} = \setEder\setminus\mathcal{I}$ for which 
\begin{equation} \label{eq:proof_Hdevice_inequality}
e_h(k) + \Dtk \cdot \underline{p}_h(k) \leq {\underline{e}_h(k+1)},\quad \forall h\in \mathcal{H}.
\end{equation}
From \eqref{eq:agg_powers} it follows that 
\begin{equation*}
E(k) = \sum_{\mathcal{I}}e_i(k) + \sum_{\mathcal{H}}e_h(k).
\end{equation*}

Suppose that $\mathcal{H}= \emptyset$, then \eqref{eq:case2_low} is false and we fall back to Case (i). 
Hence  $\mathcal{H}\neq \emptyset$. 
Moreover, for at least one $\tilde h \in \mathcal{H}$, Inequality \eqref{eq:proof_Hdevice_inequality} holds strictly (otherwise we again fall back to Case (i)).

Consider a small positive increment $\Delta > 0$ for which  $\tilde{e}_{\tilde h}(k) = {e}_{\tilde h}(k) + \Delta$  still satisfies \eqref{eq:proof_Hdevice_inequality}. 
The increment of $\tilde{e}_{\tilde h}(k)$ is always allowed. 
In fact, given Assumption \ref{req:meaningful_device_constraints_part2}, we have 
\[
\overline{e}_{\tilde h}(k) + \Dtk \cdot \underline{p}_{\tilde h}(k) \geq \underline{e}_{\tilde h}(k+1).
\]
Thus, from \eqref{eq:proof_Hdevice_inequality} (which is strictly satisfied for ${\tilde h}$), it follows that $e_{\tilde{h}}(k) < \overline{e}_{\tilde{h}}(k)$.
Next, consider w.l.o.g. one EC-DER $\tilde{i} \in \mathcal{I}$ for which the energy $\tilde{e}_{\tilde{i}}(k) = {e}_{\tilde{i}}(k) - \Delta$ is decreased such that $\tilde{e}_{\tilde{i}}(k)$ still satisfies \eqref{eq:proof_Idevice_inequality}.
This reduction is always possible:
given Assumption \ref{req:meaningful_device_constraints_part2} we have
\[
\underline{e}_{\tilde i}(k) + \Dtk \cdot \underline{p}_{\tilde i}(k) \leq \underline{e}_{\tilde i}(k+1);
\]
thus, \eqref{eq:proof_Idevice_inequality} implies $e_{\tilde i}(k) > \underline{e}_{\tilde i}(k)$.  

Increasing the energy at EC-DER $\tilde{h}$ and reducing it at the same time at EC-DER $\tilde{i}$ maintains the satisfaction of equality \eqref{eq:agg_powers}, as 
\[
E(k) = - \Delta +\sum_{\mathcal{I}}e_i(k) + \sum_{\mathcal{H}}e_h(k) + \Delta.
\] 
However, $\costDispLow(\tilde{\evect}(k),k) < \costDispLow({\evect}(k),k)$.
Moreover, $\costDispUp(\tilde{\evect}(k),k) \leq \costDispUp({\evect}(k),k)$.
In fact, the increment of $e_{\tilde{h}}(k)$ cannot increase $\costDispUp({\evect}(k),k)$, because the contribution of the components $e_h(k)$ to $\costDispUp({\evect}(k),k)$ is  zero.\footnote{Given \eqref{eq:proof_consequence_ass2}, then $\tilde{e}_{\tilde{h}}(k) \leq \ub{e}{\tilde{h}}(k+1) - \Dtk \cdot \ub{p}{\tilde{h}}(k)$.}
Thus, $\evect(k)$ does not minimize $\costDispLow(\evect(k),k)+\costDispUp({\evect}(k),k)$. 
We arrive at a contradiction, i.e. $\mathcal{I}\neq \emptyset$ contradicts \eqref{eq:problem_consistent_soc}. Hence we arrive at $\mathcal{I} = \emptyset$.

Given that \eqref{eq:proof_Hdevice_inequality} hold for all the EC-DERs, then the lower bound of $\setevhstaragg$ is equal to
\[
\sum_{\mathclap{j\in \setEder}}\lb{e}{j}^r(k+1) = \sum_{\mathclap{j\in \setEder}} \lb{e}{j}(k+1) = \lb{E}{}^r\rcekagg. 
\] 
Summing up, the lower bounds  of $\setevhragg$ and  $\setevhstaragg$ also match in Case (ii)-(a) under \eqref{eq:problem_consistent_soc}.
In a similar fashion, it can be shown that the upper bound of $\setevhragg$ equals the upper bound of $\setevhstaragg$ in Case (ii)-(b) under \eqref{eq:problem_consistent_soc}.
 \end{proof}
 As we have seen in Lemma \ref{lem:case_i} for  Case (i) and in Lemma \ref{lem:case_ii} for Case (ii), if \eqref{eq:problem_consistent_soc} holds then the upper and lower bound of $\setevhragg$ and $\setevhstaragg$ coincide, meaning that $\setevhragg \equiv \setevhstaragg$. This finishes the proof of Theorem \ref{thm:theorem_1}. 
\end{proof}

\begin{proof}[Proof of Theorem \ref{thm:existence_of_ideal_states}]~\\
By construction, if $\evect(k)$ is a consistent dispersion of $E(k)$ and $E(k+1)\in \setevhagg{k+1}$, then there exist at least a feasible dispersion of $E(k+1)$, cf. Definition \ref{def:cons_disp}.
Thus, the set ${\mathcal{E}(k+1,E(k+1))}$ is not empty.
The cost function $\costDispLow(k+1,\evect(k+1)) + \costDispUp(k+1,\evect(k+1))$ is real-valued and continuous and the set ${\mathcal{E}}(k+1,E(k+1))$ is compact.
Thus, by virtue of the extreme value theorem, the function $\costDispLow(k+1,\evect(k+1)) + \costDispUp(k+1,\evect(k+1))$ attains a minimum over ${\mathcal{E}}(k+1,E(k+1))$ meaning that at least one minimizer $\evect(k+1)$ exists.
Consequently, applying Theorem \ref{thm:theorem_1}, there exist a consistent dispersion of $E(k+1)$.
\end{proof}

\section{Discussion}
\label{sec:limits_and_adv_of_agg_model}

Theorems \ref{thm:theorem_1} and \ref{thm:existence_of_ideal_states} indicate the following approach to scheduling problems in the form of \eqref{eq:complete_opt_problem}:
\begin{enumerate}[label=(\roman*)]
\item Check/impose consistency of constraints \eqref{eq:sd_state_eq_det_vector}-\eqref{eq:sd_en_limit_vector} to Assumptions \ref{req:meaningful_device_constraints_part1} and \ref{req:meaningful_device_constraints_part2}.
\item Aggregate decision variables and constraint sets via projection $s$, as in \eqref{eq:agg_powers} and \eqref{eq:agg_borders}.
\item Solve \eqref{eq:aggregated_opt_problem} to compute the schedule $\{G(k)\}_{\mathcal{K}}$;
\item Guarantee feasibility of the aggregated schedule by dispersing the aggregated variables following \eqref{eq:problem_consistent_soc}. 
\end{enumerate}
   
However, it can be argued that the class of problems covered by \eqref{eq:complete_opt_problem} does not include several cases of practical interest.
In particular, problem \eqref{eq:complete_opt_problem} A) disregards non-linearities in the EC-DERs dynamics  \eqref{eq:sd_state_eq_det_vector}, and B) ignores constraints that involve a partial summation of the elements of $\pvect(k)$ and $\evect(k)$.
In power systems, non-linearities in the EC-DERs dynamics are used to model conversion losses, while network constraints comprise partial summations.
Next we discuss Issue A), while Issue B) is left out due to space limitations.  

\subsection{Conversion Losses} \label{sec:system_losses}
We seek a constraint reformulation that allows using Theorems \ref{thm:theorem_1} and \ref{thm:existence_of_ideal_states} for an aggregation of nonlinear dynamics.

First, consider a dynamic model of the energy state including conversion losses, for example 
\begin{equation}
\label{eq:sd_state_eq_det_loss}
\evh{j}(k+1) = \evh{j}(k) + \Dtk \cdot \pvhnb{j}(k) - \mu_j(\pvhnb{j}(k)) \Dtk \cdot \left| \pvhnb{j}(k) \right|,
\end{equation}
where the coefficient $\mu_j(\pvhnb{j}(k))$ describes the amount of energy that is lost in conversion. 
Note that this coefficient can be a function of $\pvhnb{j}(k)$, see \cite{Barry17}.
The dynamics \eqref{eq:sd_state_eq_det_loss} contain an energy-based description of the storage losses.
This model is often applied in energy management even if it simplifies the phenomena occurring in practice; nonetheless, it has been established experimentally to be sufficiently accurate for scheduling purposes, cf. \cite{Sossan16a}. 

Next, consider an alternative---yet equivalent---model describing the conversion losses in terms of lost power.
First, we define 
\begin{equation}
\label{eq:power_into_storage}
\tilde{p}_j(k) = \pvhnb{j}(k) - \mu_j(\pvhnb{j}(k)) \cdot \left| \pvhnb{j}(k) \right|,
\end{equation}
as the actual power exchange with the storage.
Then, we introduce $\tilde{p}_j(k)$ into \eqref{eq:sd_state_eq_det_loss}, which gives
\begin{equation}
\label{eq:sd_state_eq_det_loss_power}
\evh{j}(k+1) = \evh{j}(k) + \Dtk \cdot \tilde{p}_j(k).
\end{equation}
Given \eqref{eq:power_into_storage}, the power constraint \eqref{eq:sd_pow_limit} can be equivalently expressed as a constraint on $\tilde{p}_j(k)$, i.e.
\begin{equation}
\label{eq:power_constraint_losses}
\tildesetpvect{k} = \left\lbrace \tildepvect(k) \, | \, \exists \, \pvect(k) \in \setpvect{k} \textrm{ s.t. \eqref{eq:power_into_storage} holds}\right\rbrace.
%  - \mu_j(\pvhnb{j}(k)) \Dtk \cdot \left| \pvhnb{j}(k) \right| \, \text{s.t.} \, \pvhnb{j}(k) \in \setpvect{k} \right\rbrace.
\end{equation} 
In other words, conversion losses modify the boundaries of the set of reachable states at $k+1$.
Note that $\tildesetpvect{k}$ is a closed real interval. 

Finally, consider the scheduling problem \eqref{eq:complete_opt_problem_with_agg} including conversion losses as modeled by \eqref{eq:power_into_storage}-\eqref{eq:power_constraint_losses},
\begin{subequations}\label{eq:complete_opt_problem_with_agg_with_losses}
\begin{align}
\min_{
\begin{subarray}{c}
  \{G(k)\}_{\mathcal{K}}, \\ \{\tilde{P}(k)\}_{\mathcal{K}},\{E(k+1)\}_{\mathcal{K}}, \\ \{\tildepvect(k)\}_{\mathcal{K}}, \{\evect(k+1)\}_{\mathcal{K}} % \forall	k\in\mathcal{K}
  \end{subarray}} &\sum_{k \in \mathcal{K}} c\left(G(k)\right)\\
  \text{s. t. \,\,} \forall k \in \mathcal{K} \quad& \nonumber \\
G(k)\phantom{+1,} =& \, \tilde{P}(k) - L(k) - P_{\text{loss}}(k), \label{eq:constr_pow_balance_with_agg_with_losses}\\ 
\evect(k+1) = & \, \evect(k) + \Dtk \cdot \tildepvect(k) \quad \evect(0) = \evect^0, \label{eq:sd_dynamic_vector_with_agg_with_losses} \\
\tildepvect(k)\phantom{+1,} \in& \, \tildesetpvect{k} , \label{eq:sd_pow_limit_vector_with_agg_with_losses} \\
\evect(k+1) \in& \,  \setevect{k+1}, \label{eq:sd_en_limit_vector_with_agg_with_losses} \\
\tilde{P}(k)\phantom{+1,} =& \, s(\tildepvect(k)),  \\
{E}(k+1) =& \, s(\evect(k+1)), \label{eq:sum_energy_cons_with_agg_with_losses} \\
P_{\text{loss}}(k)\phantom{1} =& f(\tildepvect(k)). \label{eq:power_loss_with_agg_with_losses}
%P_{\text{loss}}(k)\phantom{1} =& \, \mu_1(\pvhnb{1}(k)) \cdot \left| \pvhnb{1}(k) \right| + \hdots \hspace{-0.2cm} & \nonumber \\
%\phantom{P_{\text{loss}}(k)1, =}& \,  + \mu_N(\pvhnb{N}(k)) \cdot \left| \pvhnb{N}(k) \right| \hspace{-0.2cm} &\forall k \in \mathcal{K}. \label{eq:sum_power_loss_with_agg_with_losses}
%& \pvhnb{j}(k) \in  \setpvhnb{j}{k} \quad \forall j \in \mathcal{N}, \label{eq:sd_pow_limit} \\
%& \evh{j}(k+1) \in \setevh{j}{k+1} \quad \forall j \in \mathcal{N}, \label{eq:sd_en_limit}  
\end{align}
\end{subequations}
The function $f(\tildepvect(k))$ returns the total conversion losses (in terms of power) resulting from $\tildepvect(k)$.\footnote{Note that depending on the relation between $\mu_j(\pvhnb{j}(k))$ and $\pvhnb{j}(k)$, $f_{\text{loss}}(\tildepvect(k))$ may not even have a closed form description.}
Equation \eqref{eq:power_loss_with_agg_with_losses} is the only practical difference between problem \eqref{eq:complete_opt_problem_with_agg_with_losses} and \eqref{eq:complete_opt_problem_with_agg}.
Thus, the question arises if the solution of \eqref{eq:complete_opt_problem_with_agg_with_losses} can be approached in two steps (aggregated scheduling and subsequent dispersion), as for \eqref{eq:complete_opt_problem_with_agg}, without compromising the result. 

The aggregated version of \eqref{eq:complete_opt_problem_with_agg_with_losses} would be
\begin{subequations}\label{eq:aggregated_opt_problem_with_losses}
\begin{align}
\min_{
\begin{subarray}{c}
   \{G(k)\}_{\mathcal{K}}, \\ \{\tilde{P}(k)\}_{\mathcal{K}},  \{E(k+1)\}_{\mathcal{K}}
  \end{subarray}} &\sum_{k \in \mathcal{K}} c\left(G(k)\right)\\
  \text{s. t. \,\,} \forall k \in \mathcal{K} \quad& \nonumber \\
G(k)  =& \, \tilde{P}(k) - L(k) - f_{\text{agg}}(\tilde{P}(k)), \label{eq:agg_pow_balance_with_losses}\\ 
E(k+1) =& \, E(k) + \Dtk \cdot \tilde{P}(k), \quad E(0) = s(\evect^0), \label{eq:agg_state_eq_det_with_losses} \\
\tilde{P}(k) \phantom{+1,} \in& \,   \tildesetpvhnbagg{k}, \label{eq:agg_pow_limit_with_losses} \\
E(k+1) \in& \, \setevhagg{k+1}. \label{eq:agg_en_limit_with_losses} 
\end{align}
\end{subequations}
Observe that constraints \eqref{eq:sd_dynamic_vector_with_agg_with_losses}-\eqref{eq:sum_energy_cons_with_agg_with_losses} and \eqref{eq:agg_state_eq_det_with_losses}-\eqref{eq:agg_en_limit_with_losses} have the exact same structure of the corresponding ones in \eqref{eq:complete_opt_problem_with_agg} and \eqref{eq:aggregated_opt_problem}. 
Therefore, the idea of a consistent dispersion and the statements of Theorem \ref{thm:theorem_1} and \ref{thm:existence_of_ideal_states} still hold.
The main difficulty is represented here by enforcement of the equivalence between \eqref{eq:agg_pow_balance_with_losses} and \eqref{eq:constr_pow_balance_with_agg_with_losses}, \eqref{eq:power_loss_with_agg_with_losses}. 

Consider a symmetric system where the losses are the same on each device and independent of $\pvhnb{j}(k)$, i.e. $\mu_j(\pvhnb{j}(k))=\mu$ for all $j\in \mathcal{N}$.
Furthermore, consider the absence of mutual exchange of power among devices, i.e. $\text{sign}(\pvhnb{j}(k))=\text{sign}(\pvhnb{i}(k))$ for all $\{j,i\}\in \mathcal{N}\times \mathcal{N}$.
Then, defining
\begin{equation} \label{eq:aggrgeated_losses_calc}
f_{\text{agg}}(\tilde{P}(k)) =  \left\{\begin{matrix}
\frac{\mu}{1 - \mu} & \text{if } \tilde{P}(k) \geq 0\\ 
\frac{-\mu}{1 + \mu} & \text{if } \tilde{P}(k) < 0
\end{matrix}\right. ,
\end{equation}
the equivalence $f_{\text{agg}}(\tilde{P}(k)) = f(\tildepvect(k))$ holds exactly for each $\tilde{P}(k) \in   \tildesetpvhnbagg{k}$.  
Thus, in this specific case, \eqref{eq:agg_pow_balance_with_losses} is equivalent to \eqref{eq:constr_pow_balance_with_agg_with_losses}, \eqref{eq:power_loss_with_agg_with_losses}, and aggregation as in \eqref{eq:aggregated_opt_problem_with_losses} leads to the same solution of \eqref{eq:complete_opt_problem_with_agg_with_losses}.
However, in general $f_{\text{agg}}(\tilde{P}(k))$ can only approximate $P_{\text{loss}}(k)$ and the solution of \eqref{eq:aggregated_opt_problem_with_losses} deviates from the one of the original problem \eqref{eq:complete_opt_problem_with_agg_with_losses} depending on the severity of this approximation.
This aspect might compromise the effectiveness of aggregation. 
Eventually, aggregation of groups of EC-DERs with similar conversion losses can reduce the severity of the approximation at the price of a slightly higher number of variables in the aggregated problem. 

\section{Conclusion}
\label{sec:conclusion}

The present paper investigated scheduling for aggregated energy systems based on an aggregated time-varying model of heterogeneous energy-constrained distributed energy resources.
The aggregated model reduces the number of parameters and decision variables in the scheduling problem, lowering the computation burden and improving the management of the uncertainties.   
However, aggregated models, as often employed in the literature, may extend the feasible space to values that cannot be attained in practice.
This paper examined the causes of this undesirable phenomenon.
We show that, whenever the energy states of the individual devices satisfy a specific (collective) property---existence of a \textit{consistent dispersion of the aggregated energy state} at the first time-step---the aggregation does not  alter the feasible set and thus does not imply any optimality loss.
Our main contribution is to prove that a consistent dispersion always exists under mild assumptions on the constraints of the various devices, considered separately.
Our findings justify the application of an aggregated model in scheduling, provide a technique to check the consistency of an aggregated model, and propose a way to disperse the aggregated energy state which is free from any time dependence.
Furthermore, our results allow consideration on advantages and limitations of aggregated models in scheduling; we discussed here the case of conversion losses to provide an example.
Future work will apply the described methods to a real test case and further analyze the presence of network constraints.         

% if have a single appendix:
%\appendix[Proof of the Zonklar Equations]
% or
%\appendix  % for no appendix heading
% do not use \section anymore after \appendix, only \section*
% is possibly needed

% use appendices with more than one appendix
% then use \section to start each appendix
% you must declare a \section before using any
% \subsection or using \label (\appendices by itself
% starts a section numbered zero.)
%

%\appendices
%\section{Proof of the First Zonklar Equation}
%Appendix one text goes here.
%
%% you can choose not to have a title for an appendix
%% if you want by leaving the argument blank
%\section{}
%Appendix two text goes here.

% use section* for acknowledgment
%\section*{Acknowledgment}

% Can use something like this to put references on a page
% by themselves when using endfloat and the captionsoff option.
\ifCLASSOPTIONcaptionsoff
  \newpage
\fi

% trigger a \newpage just before the given reference
% number - used to balance the columns on the last page
% adjust value as needed - may need to be readjusted if
% the document is modified later
%\IEEEtriggeratref{8}
% The "triggered" command can be changed if desired:
%\IEEEtriggercmd{\enlargethispage{-5in}}

% references section

% can use a bibliography generated by BibTeX as a .bbl file
% BibTeX documentation can be easily obtained at:
% http://mirror.ctan.org/biblio/bibtex/contrib/doc/
% The IEEEtran BibTeX style support page is at:
% http://www.michaelshell.org/tex/ieeetran/bibtex/
%\bibliographystyle{IEEEtran}
% argument is your BibTeX string definitions and bibliography database(s)
%\bibliography{IEEEabrv,../bib/paper}

\bibliographystyle{IEEEtran}
\bibliography{Appino_bib}{}
\end{document}